\theoremstyle{plain}
\newtheorem{thm}{Theorem}[section]
\newtheorem{lem}[thm]{Lemma}
\newtheorem{prop}[thm]{Proposition}
\newtheorem*{lemappx}{Lemma}
\newtheorem*{thmappx}{Theorem}
\theoremstyle{definition}
\newtheorem{defn}{Definition}[section]
\theoremstyle{remark}
\newtheorem*{rem}{Remark}
\newcommand{\opt}{\text{OPT}}
\newcommand{\cost}{\text{Cost}}
\newcommand{\eps}{\varepsilon}
\newcommand{\argmax}{\text{argmax}}
\newcommand{\argmin}{\text{argmin}}
\newcommand{\lemThreshold}{1/\eps^2 + 1/\eps}
\newcommand{\len}{\text{length}}
\newcommand{\dist}{\text{dist}}
\newcommand{\tsp}{\text{TSP}}
\newcommand{\steiner}{\text{ST}}
\newcommand{\ins}{\text{In}}
\newcommand{\cross}{\text{Cross}}
\newcommand{\out}{\text{Out}}
\begin{document}

\keywords{Local Search, PTAS, Facility Location, $k$-Median, TSP, Steiner Tree}

\title{Effectiveness of Local Search for Geometric Optimization}

\author[1]{Vincent Cohen-Addad}
\author[1]{Claire Mathieu\footnote{Partially supported by ANR RDAM}}
\affil[1]{D\'{e}partement d’Informatique, UMR CNRS 8548,
\'{E}cole Normale Sup\'{e}rieure, Paris, France\\
\{vcohen, cmathieu\}@di.ens.fr}

\maketitle
\begin{abstract}
What is the effectiveness of local search algorithms for geometric problems in the plane?
We prove that  local search with neighborhoods of magnitude $1/\epsilon^c$ is an approximation scheme for  the following problems in the Euclidean plane: TSP with random inputs, Steiner tree with random inputs, uniform facility location (with worst case inputs), and bicriteria $k$-median (also with worst case inputs). 
The randomness assumption is necessary for TSP.

\end{abstract}

\section{Introduction}

{\bf Local search.} Local search techniques are popular heuristics for hard combinatorial optimization problems. Given a feasible solution, the algorithm repeatedly performs operations from the given class, each improving the cost of the current solution, until a solution is reached for which no operation yields an improvement (a locally optimal solution). Alternatively, we can view this as a neighborhood search process, where each
solution has an associated neighborhood of adjacent solutions, i.e., those that can be reached with a
single operation, and one moves to a better neighbor until none.
Such techniques are easy to implement, easy to parallelize, and fast and give good results. 
One advantageous feature of local search  algorithms is their flexibility; they can be applied to arbitrary cost functions, even in the presence of
additional constraints.
However, there has long been a gap between worst-case guarantees and real-world experience.
Thus, it is interesting to analyze such algorithms rigorously and, even in settings where alternative, theoretically optimal polynomial-time algorithms are known.

{\bf Problems studied.} We focus on Euclidean problems in the plane (the results extend to small dimensions), and study clustering and network connectivity type problems: the traveling
salesman problem (TSP), Steiner tree, facility location, and k-median.
The \textit{traveling salesman} problem is to connect $n$ input points with  a tour of minimum total length. The \textit{Steiner tree} problem, given $n$ terminal points, is to choose additional {\em Steiner} points so as to minimize the length of the minimum tree spanning terminal and Steiner points. The \textit{facility location} problem, given $n$ client points and a facility opening cost $f$, chooses how many facilities to open and where to open them to minimize the combination of the cost of opening facilities and of the total distance from each client to the nearest open facility. The \textit{$k$-median} problem, given $n$ points and an integer $k$, chooses where to open $k$ facilities so as to minimize the total distance from each client to the nearest open facility.

{\bf Algorithms.} Our goal is to prove, under minimal assumptions, that local search finds solutions whose cost is within a $(1+\epsilon)$ factor of optimal. For that goal, local search must do a little more: instead of modifying the current solution by swapping a single point, edge or edge pair (depending on the problem) in and out of the solution, our version of local search swaps up to $1/\epsilon^c$ points, edges or edge pairs. This is a standard variation of local search (particularly for the traveling salesman tour), whereby each iteration is slowed down due to an increase in the size of the neighborhood, but the local optimum tends to be reached after fewer iterations and is of higher quality. Moreover, most implementations of local search do not continue iterating all the way to a local optimum, but stop once the gain obtained by each additional iteration is essentially negligible. Our algorithm thus has a stopping condition, when no local exchange could improve the cost by more than a factor of $1-1/n$. Then, the runtime is polynomial, at most $n^{1/\epsilon^{O(1)}}$.

{\bf Results.} Our results are as follows.
\begin{enumerate}
\item
 For TSP, we  assume that the input points are random uniform in $[0,1]^2$. Here
local search swaps $O(1/\epsilon^c)$ edges in the tour. Then  local
search finds a solution with cost $(1+\mathcal{O}(\epsilon)) OPT$.
The proof is not difficult and serves as a warm-up to the later sections. The random input
assumption is necessary : in the worst-case setting, we give an
example where a locally optimal solution has cost more than
$(2-\epsilon)OPT$. 

\item 
Similarly, for Steiner tree, assuming random uniform input, again
local search  finds a solution with cost $(1+\epsilon) OPT$.
\item
 For  facility location, we prove the following: consider the
version of local search where local moves consist of adding, deleting
or swapping $O(1/\epsilon^c)$ facilities. Then, even for worst case inputs, local search finds a
solution with cost $(1+\epsilon) OPT$. This is the core result of the paper.
We transform the dissection technique from Kolliopoulos and Rao~\cite{Kolliopoulos07} into a tool for analyzing local search.
\item
 For $k$-median, our result is similar, except that local search uses
$(1+\epsilon)k$ medians instead of $k$, so that result is bicriteria. This is a technical, variant of the facility location result.
\end{enumerate}

{\bf Related work.}
\textbf{TSP and Steiner Tree.}
The TSP problem in the Euclidean plane has a long history, including work with local search~\cite{Croes58,Lin65,Lin73}. Most relevant is the work of Karp~\cite{Karp77} giving a simple construction of a near-optimal tour when points are drawn from a random distribution. That work has been subsumed by the approximation schemes of Arora~\cite{Arora96} (and its improvements~\cite{Arora97,Rao98}) and of Mitchell~\cite{Mitchell99}, using a hierarchical dissection technique. Arora noted the relation between that technique and local search, observing: 
\begin{quote}
\textit{
Local-exchange algorithms for the TSP work by identifying possible edge exchanges in the current tour that lower the cost 
[…].
Our dynamic programming algorithm can be restated as a slightly more inefficient backtracking […]. 
Thus it resembles $k$-OPT for $k=O(c)$, except that cost-increasing exchanges have to be allowed in order to undo bad guesses. Maybe it is closer in spirit to more ad-hoc heuristics such as genetic algorithms, which do allow cost-increasing exchanges.
}
\end{quote}

In fact, even with neighborhoods of size $f(\epsilon)$, even in the Euclidean plane, local search for TSP can get stuck in a local optimum whose value is far from the global optimum (See Fig. \ref{fig:lb_tsp}). However, in the case of random inputs the intuition is correct.
Local search algorithms have been widely studied for TSP, but mostly for either a local neighborhood limited to size of 2 or 3 (the 2-OPT or 3-OPT algorithms), or for the general metric case. Those studies lead to proofs of constant factor approximations, see~\cite{Chandra94,Johnson97,Mersmann12,Lin73,Rosenkrantz77}. In particular, in~\cite{Chandra94}, it is proved (by example) that  for Euclidean TSP 2-OPT cannot be a constant-factor approximation in the worst case.
For the metric Steiner Tree problem, the best approximation algorithm up to 2010 was a constant factor approximation due to Robins and Zelikovsky and was by local search \cite{Robins00}.

\textbf{Facility Location and $k$-Median.}
For clustering problems — facility location and $k$-median — there has also been much prior work. A proof of NP-Hardness of $k$-median even in the Euclidean setting is given in~\cite{Megiddo84}. 
The first theoretical guarantees for local search algorithms for 
clustering problems are due to Korupolu et al. \cite{Korupolu00}. They show that the local search algorithm which allows swaps of size $p$ is a constant factor approximation for the metric case of the $k$-Median and Facility Location problems. However, for $k$-Median the algorithm requires a constant-factor
blowup in the parameter $k$.
By further refining the analysis, Charikar et al. \cite{Charikar05} improved the 
approximation ratio.
More recently, Arya et al. showed in \cite{Arya04} that the local search
algorithm which allows swaps of size $p$ is a $3+2/p$-approximation 
without any blowup in the number of medians.
Nevertheless, no better results were known for the Euclidean 
case (See the survey paper \cite{Vygen05}).
Kolliopoulos and Rao define in \cite{Kolliopoulos07} a recursive ``adaptive'' dissection of a square enclosing the input points. At each dissection step
\footnote{There is also a ``sub-rectangle'' step not described here.},
they cut the longer side of each rectangle produced by the previous step in such a way that each of the two parts has roughly the same surface area.
Our analysis uses a new version of their dissection algorithm to analyze the local search algoritm.

\textbf{Other related work.}
The question of the efficiency of local search for Euclidean problems was already posed by Mustafa and Ray and Chan and Har-Peled. They proved that local search (with local neighborhood enabling moves of size $\Theta(1/\epsilon)$) gives approximation schemes for hitting circular disks in two dimensions with the fewest points, for several other Euclidean hitting set problems~\cite{Mustafa09}, and for independent sets of pseudo-disks~\cite{Chan09}.
This led to further PTASs by local search for dominating set in disks graph \cite{Gibson10} and for terrain guarding \cite{Krohn14}.
Those papers rely on the combinatorial properties of bipartite planar graphs.
Our analysis technique is different since we rely on dissections.

One problem related to facility location is $k$-means. For $k$-means, Kanungo, Mount, Netanyahu and Piatko~\cite{Kanungo04} proved that local search gives a constant factor approximation. Much remains to be understood.

We also note that there exists proofs of constant factor approximation by local search for the metric capacitated facility location \cite{Chudak05}.

\textbf{Plan.}
The paper is organized as follows: in the next section, as a warm-up we prove the results on TSP and Steiner tree for random inputs. We then analyze local search for facility location, proposing a new recursive dissection. We suitably extend lemmas from~\cite{Kolliopoulos07}. The meat of that section is the proof of Proposition
\ref{thm:new_struct}, which is our main technical contribution.
We end with the $k$-median result, that requires additional ideas to deal with the cardinality constraint.

\section{Polynomial-Time Local Search Algorithms}
Throughout this paper, we denote by $L \bigtriangleup L'$ the symmetric difference of the sets $L$ and $L'$.
We present the local search algorithm that is considered in this paper (see Algorithm \ref{algo:LS} below).

\begin{algorithm}
  \caption{Local Search ($\eps$)}
  \label{algo:LS}
  \begin{algorithmic}[1]
    \State \textbf{Input:} A set $\mathcal{C}$ of points in the Euclidean plane
    \State $S \gets$ Arbitrary feasible solution (of cost at most $\mathcal{O}(2^{n} \opt)$).
    \While{$\exists$ $S'$ s.t. Condition($S',\eps$) 
      \textbf{and} cost($S'$) $\le$ $(1-1/n)$ cost($S$)\\}
    
    \State $S \gets S'$
    \EndWhile
    \State \textbf{Output:} $S$     
  \end{algorithmic}

\end{algorithm}

Note that the type of $S$, Condition, $f(\eps)$ and $\cost(S)$ are problem dependent. Namely,
\begin{itemize}
\item for Facility Location, $S$ is a set of points, Condition($S',\eps$) is $ |S \bigtriangleup S'| = \mathcal{O}(1/\eps^3)$ and 
  $\cost(S) = |S| + \sum\limits_{c \in \mathcal{C}} \min\limits_{s \in S} d(c,s)$;
\item for $k$-Median, $S$ is a set of points, Condition($S',\eps$) is $|S \bigtriangleup S'| = \mathcal{O}(1/\eps^9)$ and $|S'| \le (1+3\eps)k$
  and $\cost(S) = \sum\limits_{c \in \mathcal{C}} \min\limits_{s \in S} d(c,s)$;
\item for TSP $S$ is a set of edges, Condition($S',\eps$) is $|S \bigtriangleup S'| = \mathcal{O}(1/\eps^2)$  and ``$S'$ is a tour and there
  is no two edges intersecting'' (if the
  initial tour contains intersecting edges we start by modifying the tour so that no two edges intersect)
  and $\cost(S) = \sum\limits_{s \in S} \len(s)$;
\item for Steiner Tree, $S$ is a set of points, Condition($S',\eps$) is $|S \bigtriangleup S'| = \mathcal{O}(1/\eps^2)$ and
  $|S'| \le n$ (if the initial set of Steiner vertices is greater than $n$, we greedily remove Steiner vertices until the set has size $n$)
  and $\cost(S) = \text{MST}(S \cup \mathcal{C})$, where $\text{MST}(S \cup \mathcal{C})$ is the length of the minimum spanning tree of the points in
  $S \cup \mathcal{C}$.
\end{itemize}

We now focus on the guarantees on the execution time of the algorithms presented in this paper.
The proof of the following Lemma is deferred to Appendix \ref{appx:ptime}.

\begin{lem}
\label{lem:polytime}
  The number of iterations of Algorithm \ref{algo:LS} is polynomial for the Facility Location, $k$-Median, Traveling Salesman and Steiner Tree Problems.
\end{lem}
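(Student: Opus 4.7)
The plan is to combine the multiplicative improvement guaranteed by the while-condition with the assumed initial-cost bound. Let $S_0$ be the starting solution and let $S_0, S_1, \ldots, S_t$ denote the sequence of solutions produced by Algorithm \ref{algo:LS}. Since the algorithm only moves from $S_i$ to $S_{i+1}$ when $\cost(S_{i+1}) \le (1-1/n)\cost(S_i)$, iterating gives
\[
\cost(S_t) \le (1-1/n)^t \cdot \cost(S_0) \le (1-1/n)^t \cdot O(2^n \opt).
\]
Since $S_t$ is feasible, $\cost(S_t) \ge \opt$, so $(1-1/n)^t \ge \Omega(2^{-n})$. Using $\ln(1-1/n) \le -1/n$, taking logarithms yields $t/n = O(n)$, hence $t = O(n^2)$ iterations.

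The argument is identical for all four problems modulo two mild hypotheses: (i) $\opt > 0$, and (ii) a feasible initial solution of cost $O(2^n\opt)$ is computable in polynomial time. These are straightforward to check in each case. For Facility Location, $\opt \ge f > 0$, and opening a single arbitrary facility yields cost $f + n\cdot \mathrm{diam}(\mathcal{C})$, which is at most $\mathrm{poly}(n)\cdot \opt$. For $k$-Median on inputs with more than $k$ distinct points, $\opt$ is bounded below by the smallest positive pairwise distance, while opening any $k$ client locations gives cost $O(n\cdot\mathrm{diam}(\mathcal{C}))$. For TSP and Steiner Tree on $n$ uniform points in $[0,1]^2$, the Beardwood–Halton–Hammersley theorem gives $\opt = \Theta(\sqrt n)$ almost surely, while a trivial tour (respectively, an MST on the terminals) has cost $O(n)$; in each case the ratio $\cost(S_0)/\opt$ is $2^{O(n)}$, as required.

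The main (minor) obstacle is that the argument does not apply when $\opt = 0$, which can only occur for degenerate inputs where all terminals/clients coincide. Such instances are trivial: every cost is zero and the algorithm terminates on $S_0$. Once $\opt > 0$ is guaranteed, the derivation above immediately gives the claimed polynomial bound, so the proof reduces to the elementary calculation sketched in the first paragraph plus the per-problem verification in the second.
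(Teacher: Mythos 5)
Your argument is essentially the paper's own proof: both bound the number of iterations by the logarithm of the ratio $\cost(S_0)/\cost(S_t)$ divided by $\log\frac{1}{1-1/n}$, using the per-iteration factor $(1-1/n)$ and a lower bound on the final cost. One small caveat: your side claim for worst-case Facility Location that $f + n\cdot\mathrm{diam}(\mathcal{C}) \le \mathrm{poly}(n)\cdot\opt$ is not true in general (the geometric spread can be arbitrarily large relative to $\opt$), but this is immaterial since the algorithm already assumes an initial solution of cost $\mathcal{O}(2^n\opt)$, exactly the normalization the paper handles by rescaling the distances.
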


\begin{rem}
  Up to discretizing the plane and replacing $(1-1/n)$ by $(1-\Theta(1/n))$, finding $S'$ takes time $\mathcal{O}(n^{\mathcal{O}(1/\eps^c)}\eps^{-1})$, for some constant $c$ which depends on the algorithm.
\end{rem}

\section{Euclidean Traveling Salesman Problem and Steiner Tree}


\begin{thm}\label{thm:ptas_expct}
  Consider a set of points chosen independently and uniformly in $[0,1]^2$. Algorithm \ref{algo:LS} produces:
  \begin{itemize}
  \item In the case of the Traveling Salesman problem, a tour whose length is at most $(1 + \mathcal{O}(\eps)) T_{\opt} $,
    where $T_{\opt}$ is the length of the optimal solution.
  \item In the case of the Steiner Tree problem, a tree  whose length is at most $(1 + \mathcal{O}(\eps)) T_{\opt} $,
    where $T_{\opt}$ is the length of the optimal solution.
  \end{itemize}

\end{thm}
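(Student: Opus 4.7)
The plan is to combine a Karp-style random grid dissection with the local optimality of $LS$ under $1/\eps^2$-edge swaps. By Beardwood--Halton--Hammersley and standard concentration, $|\opt| = \Theta(\sqrt n)$ with high probability. I place a uniformly random shift of an axis-aligned grid of side length $s = \Theta(1/(\eps\sqrt n))$, producing $M = \Theta(\eps^2 n)$ cells; by a Chernoff/union-bound argument, with high probability every cell contains $O(1/\eps^2)$ input points.

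I first build from $\opt$ a grid-compatible witness tour $T^{*}$: for every cell $C$ that $\opt$ enters more than once, reroute $\opt$ along $\partial C$ so that the visits to $C$ merge into a single contiguous arc. By Cauchy--Crofton the expected number of crossings of $\opt$ with the grid is $O(|\opt|/s)$, so charging $O(s)$ per patched crossing gives $|T^{*}| \le (1 + O(\eps))\,|\opt|$; moreover, since $T^{*}$ visits each cell in one arc, it has at most $O(M)$ cross-cell edges with total length $O(M s) = O(\eps|\opt|)$.

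Next, for each cell $C$ I apply the local exchange that removes every $LS$-edge with an endpoint in $C$, replaces it by the corresponding $T^{*}$-edge, and adds $O(m_C)$ short boundary patches on $\partial C$, where $m_C$ is the number of arcs of $LS$ inside $C$. Each exchange modifies $O(|C|) = O(1/\eps^2)$ edges, fits the allowed neighborhood, and by construction yields a valid tour. Local optimality of $LS$ then bounds the total length of $LS$-edges incident to $C$ by the total length of $T^{*}$-edges incident to $C$ plus $\frac{1}{n}\cost(LS) + O(s\,m_C)$. Summing over the $M$ cells, using $\sum_C m_C = O(\cost(LS)/s)$ (Cauchy--Crofton applied to $LS$) together with $M/n = \eps^2$, I arrive at $\cost(LS) \le (1+O(\eps))|\opt| + \eps^2\cost(LS)$, which rearranges to the claim.

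The main obstacle is producing, in each cell, a single local exchange that is both small (size $O(1/\eps^2)$) and yields a valid tour — the entry/exit points of $LS$ and of $T^{*}$ on $\partial C$ do not generally coincide, which is what forces us to swap all incident edges (not only the interior ones) and to add the $O(m_C)$ boundary patches; it is the random shift of the grid that lets Cauchy--Crofton control the total cost of these patches. The Steiner-tree statement is handled by the same plan with only cosmetic changes: tours are replaced by trees, the swap acts on Steiner vertices (at most $1/\eps^2$ of them), the cell-wise MST cost plays the role of cell-wise TSP cost, and reconnecting trees on $\partial C$ is easier because no orientation has to be preserved.
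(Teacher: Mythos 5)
Your plan has two genuine gaps, both at points where the paper is forced to do something more careful. First, the uniform randomly shifted grid does not give the occupancy bound you rely on: with $M=\Theta(\eps^2 n)$ cells the \emph{expected} number of points per cell is $\Theta(1/\eps^2)$, but for fixed $\eps$ the maximum occupancy over $\Theta(n)$ cells is $\Theta(\log n/\log\log n)$ with high probability, not $\mathcal{O}(1/\eps^2)$. Hence your per-cell exchange, which swaps every edge of the local optimum $L$ with an endpoint in the cell, need not fit in the $\mathcal{O}(1/\eps^2)$ neighborhood that Algorithm~\ref{algo:LS} allows. This is exactly why the paper uses the Karp-style \emph{adaptive} dissection (halving the point set at each cut), which guarantees $\Theta(1/\eps^2)$ points per box by construction while Lemma~\ref{lem:Karp77} still bounds the total perimeter by $\mathcal{O}(\eps\sqrt n)$. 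A related size issue: the number $m_C$ of arcs of $L$ meeting a cell is not controlled by the number of points inside it (long edges of $L$ cross many cells), and your exchange introduces $\mathcal{O}(m_C)$ patches, so its size is $\Omega(m_C)$; the paper handles this by further splitting each box into sub-boxes containing $\Theta(1/\eps^2)$ consecutive crossing segments, paying only an extra $\mathcal{O}(\eps^2 L(b))$ (see the proofs of Lemmas~\ref{lem:local_opt_st} and~\ref{lem:local_opt_tsp}).

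Second, the cost accounting by ``$\mathcal{O}(s)$ per crossing'' cannot yield a $(1+\mathcal{O}(\eps))$ factor. The expected number of grid crossings of a curve of length $\Lambda$ is $\Theta(\Lambda/s)$, so your patching of $\opt$ into $T^{*}$ costs $\Theta(|\opt|)$, not $\mathcal{O}(\eps)\,|\opt|$, and likewise $\sum_C s\,m_C=\Theta(\cost(L))$, so your final inequality becomes $\cost(L)\le(1+\mathcal{O}(\eps))|\opt|+\Theta(\cost(L))$, which is vacuous: on average a cell is crossed $\Theta(1/\eps)$ times and you pay up to half a perimeter per visit. The fix is the paper's accounting: all reconnection inside a cell is charged to its boundary with constant multiplicity (boundary plus a minimum-length matching of the odd-degree boundary vertices costs at most $\tfrac{3}{2}|\partial C|$ per cell, independently of the number of crossings), so the total overhead is $\mathcal{O}(\sum_C|\partial C|)=\mathcal{O}(\eps\sqrt n)$, and only then does Theorem~\ref{thm:random_tsp} convert this additive term into $\mathcal{O}(\eps)\,T_{\opt}$, as in the paper's derivation of Theorem~\ref{thm:ptas_expct} from Theorem~\ref{thm:ub_tsp_st}. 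With these two repairs (adaptive dissection with sub-boxes, and perimeter-based charging) your argument essentially coincides with the paper's; as written, both the construction of $T^{*}$ and the summation step are incorrect.
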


To prove Theorem \ref{thm:ptas_expct}, we first prove the following result.

\begin{thm}\label{thm:ub_tsp_st}
  Consider an arbitrary set of points in $[0,1]^2$. Algorithm \ref{algo:LS} produces:
  \begin{itemize}
  \item In the case of the Traveling Salesman problem, a tour whose length is at most $(1 + \mathcal{O}(\eps^2))T_{\opt} + \mathcal{O}(\eps \sqrt{n})$,
    where $T_{\opt}$ is the length of the optimal solution.
  \item In the case of the Steiner Tree problem, a tree  whose length is at most $(1 + \mathcal{O}(\eps^2)) T_{\opt} + \mathcal{O}(\eps \sqrt{ n})$,
    where $T_{\opt}$ is the length of the optimal solution.
  \end{itemize}
\end{thm}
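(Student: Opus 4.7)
The plan is to compare the locally optimal tour $T$ to the globally optimal tour $T^*$ via a randomly shifted grid dissection of $[0,1]^2$, exploiting local optimality cell-by-cell while absorbing boundary corrections into the additive $\mathcal{O}(\eps \sqrt n)$ term.

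I would begin by overlaying a uniformly random shifted axis-aligned grid of spacing $s = \Theta(\eps)$ on $[0,1]^2$. For an edge of length $\ell$, the expected number of grid-line crossings is $\mathcal{O}(\ell/s)$, so the expected total number of crossings of $T^*$ is $\mathcal{O}(T^*/s)$. Applying an Arora-style portal patching with $\Theta(1/\eps)$ portals per unit length of grid line, I would reroute each crossing of $T^*$ through its nearest portal to produce a modified tour $\tilde T$ with $\cost(\tilde T) \le (1+\mathcal{O}(\eps^2)) \cost(T^*)$ and having only $\mathcal{O}(1/\eps)$ crossings per cell boundary. If a cell still contains more than $\Theta(1/\eps^2)$ edges of $\tilde T$, I would refine it recursively by a quadtree-style subdivision until every leaf cell has bounded complexity.

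For each leaf cell $C$, I would consider the candidate local move that deletes the edges of $T$ meeting $C$ and inserts the edges of $\tilde T$ meeting $C$, together with a short boundary patch of segments along the cell boundary that keeps the result a single Hamiltonian tour. The total number of edge changes is $\mathcal{O}(1/\eps^2)$ by construction, so the move is admissible for Algorithm~\ref{algo:LS}; by local optimality of $T$, the move cannot save more than a $1/n$ fraction of $\cost(T)$. Summing the resulting cell-wise inequality over all leaf cells gives
\begin{equation*}
\cost(T) \le \cost(\tilde T) + \mathcal{O}(\eps \sqrt n),
\end{equation*}
where the additive term bounds the total length of all boundary patches using that any tour on $n$ points in $[0,1]^2$ has length $\mathcal{O}(\sqrt n)$ (the BHH upper bound), telescoped geometrically over dissection levels. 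Combined with the portal inequality this yields the TSP statement.

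The Steiner tree case is handled analogously: the cell-wise local move swaps $\mathcal{O}(1/\eps^2)$ Steiner points rather than edges, the candidate replacement consists of the Steiner points of the patched optimum inside the cell, and boundary corrections are again $\mathcal{O}(\eps \sqrt n)$. The main obstacle, and the reason the neighborhood must have size $\mathcal{O}(1/\eps^2)$ rather than $\mathcal{O}(1/\eps)$, is the Hamiltonicity constraint in the TSP case: the $\mathcal{O}(1/\eps)$ boundary crossings of $T$ and of $\tilde T$ on each side of a cell must be matched in pairs so that the resulting edge set is a single cycle, which costs an extra $\mathcal{O}(1/\eps)$ boundary-walk edges per side. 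For Steiner tree, the analogous concern is merely connectivity of the resulting tree, which is easier to enforce and does not require an additional Hamiltonicity patch.
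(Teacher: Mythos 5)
Your high-level plan (compare the local optimum to the global optimum cell-by-cell, use local optimality in each cell, absorb boundary cost into an $\mathcal{O}(\eps\sqrt n)$ additive term) matches the paper, but two of your key ingredients are wrong for this statement, which is for \emph{arbitrary} point sets, and the paper avoids them.

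First and most importantly, the paper does not use a uniformly shifted grid of spacing $\Theta(\eps)$: it uses Karp's adaptive dissection, which repeatedly cuts the longer side of each rectangle so that each half contains half the points, stopping when each box has $\Theta(1/\eps^2)$ points. The whole reason this is needed is Karp's bound (Lemma~\ref{lem:Karp77}): the resulting boxes $\mathcal{B}$ satisfy $\sum_{b\in\mathcal{B}}|\partial b| = \mathcal{O}(\eps\sqrt n)$ for an \emph{arbitrary} $n$-point set. Your uniform grid of spacing $\Theta(\eps)$ has total cell perimeter $\Theta(1/\eps)$, independent of $n$, and the additive cost of the boundary patches is proportional to this total perimeter, not to the tour length. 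The BHH upper bound $T=\mathcal{O}(\sqrt n)$ does not rescue this: it bounds the tour, not the grid. So your additive term is $\Theta(1/\eps)$, which exceeds $\mathcal{O}(\eps\sqrt n)$ unless $n\ge 1/\eps^4$; there is no way to telescope it down without adapting the dissection to the point density, which is exactly what Karp's dissection does and your grid does not. Relatedly, your quadtree refinement controls the number of edges of the \emph{patched optimum} $\tilde T$ per cell, but not the number of edges of the local optimum $T$ entering a cell; the paper controls both by dissecting on the point set $\mathcal P$ (so every box sees $\Theta(1/\eps^2)$ tour vertices) and then further partitioning crossing segments of the local optimum into sub-boxes of $\Theta(1/\eps^2)$ consecutive crossings, a step that has no counterpart in your sketch but is needed to make the swap admissible.

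Second, the paper never introduces Arora-style portals for TSP or Steiner tree. Instead of pre-patching $T^*$ through portals, it directly builds the comparison tour inside each (sub-)box as an Eulerian graph: the edges of $T^*$ inside, the edges of $L$ outside, the box boundary $\partial b_0$, plus a minimum-length matching of the odd-degree boundary vertices, giving the clean bound $L' \le \out(L,b_0) + \ins(T^*,b_0) + \tfrac{3}{2}|\partial b_0|$. This avoids the portal machinery entirely and is what produces the additive $\tfrac{3}{2}|\partial b_0|$ that is later summed using Karp's lemma. (As a side remark, even within your framework the portal count is off: $\Theta(1/\eps)$ portals per unit grid length gives an expected rerouting cost $\Theta(T^*)$, not $\mathcal{O}(\eps^2 T^*)$; you would need $\Theta(1/\eps^2)$ portals per cell side, but this does not cure the perimeter problem above.)
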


We model a random distribution of points in a region $\mathcal{P}$ of the plane by a
two-dimensional Poisson distribution $\Pi_n(\mathcal{P})$. The distribution $\Pi_n(\mathcal{P})$ is determined
by the following assumptions:
\begin{enumerate}
\item the numbers of points occurring in two or more disjoint sub-regions are
distributed independently of each other;
\item the expected number of points in a region $A$ is $n v(A)$ where $v(A)$ is the area
of $A$; and
\item as $v(A)$ tends to zero, the probability of more than one point occurring in $A$
tends to zero faster than $v(A)$.
\end{enumerate}
From these assumptions it follows that
Pr$[A \text{ contains exactly } m \text{ points}] = e^{- \lambda}\lambda^m/m!$, where $\lambda = nv(A)$.
The following result is known.

\begin{thm}
  \label{thm:random_tsp}
  \cite{Beardwood59}
  Let $\mathcal{P}$ be a set of $n$ points distributed according to a two-dimensional Poisson distribution $\Pi_n(\mathcal{P})$ in $[0,1]^2$ and
  let $T_n(\mathcal{P})$ be the random variable that denotes the length of the shortest tour
  through the points in $\mathcal{P}$.
  There exists a positive constant $\beta$ (independent of $\mathcal{P}$) such that
  $T_n(\mathcal{P})/\sqrt{n} \rightarrow \beta$ with probability 1.
\end{thm}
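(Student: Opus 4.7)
The plan is to prove this classical Beardwood--Halton--Hammersley limit theorem through a combination of a deterministic subadditivity inequality for $\tsp$, the self-similarity and independence properties of the Poisson process, and a concentration argument. First, I would establish a geometric subadditivity lemma: for any set $P$ of points in $[0,1]^2$ and any integer $k \geq 1$, partitioning the square into $k^2$ equal sub-squares of side $1/k$ and denoting by $P_1,\dots,P_{k^2}$ the resulting subdivision of $P$, one has $\tsp(P) \leq \sum_{i=1}^{k^2} \tsp(P_i) + O(k)$. The sum counts the lengths of sub-tours inside each sub-square, and the $O(k)$ term accounts for stitching the sub-tours into one global tour (visit the sub-squares in a boustrophedon order and pay at most $O(1/k)$ per junction over $O(k^2)$ junctions). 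A matching superadditivity bound $\tsp(P) \geq \sum_i \tsp(P_i) - O(k)$ is obtained by cutting the optimal tour at the grid lines, closing the resulting open paths inside each sub-square, and charging the closure cost to the total length of grid-line segments crossed.

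Next, I would exploit the Poisson structure. By the independence-on-disjoint-regions axiom defining $\Pi_n$, the point counts in the $k^2$ sub-squares are i.i.d. Poisson with parameter $n/k^2$; moreover, rescaling each sub-square by a factor of $k$ turns the corresponding sub-problem into an independent copy of the original problem at intensity $n/k^2$. Setting $f(n) = \Exp[T_n(\mathcal{P})]$, the sub- and superadditive bounds give $|f(n) - k \cdot f(n/k^2)| = O(k)$, from which one extracts that $f(n)/\sqrt{n}$ is Cauchy along doubly-indexed subsequences and thus converges to some constant $\beta \geq 0$. An elementary lower bound, summing distances to nearest neighbors (of order $1/\sqrt{n}$ in expectation), shows $\beta > 0$.

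Third, I would upgrade convergence of means to almost sure convergence by concentration. A bounded-differences martingale / Efron--Stein argument shows that adding or removing a single point of $\mathcal{P}$ changes $T_n(\mathcal{P})$ by at most $O(1/\sqrt{n})$ with overwhelming probability, since any extra point can be spliced into the current tour via a detour to and from its nearest neighbor. The resulting Azuma-type inequality yields $\Pr[|T_n(\mathcal{P}) - f(n)| > t\sqrt{n}] \leq \exp(-\Omega(t^2))$ (after separately controlling the Poisson fluctuation of $|\mathcal{P}|$), and the Borel--Cantelli lemma applied along the sequence of integers $n$ then gives $(T_n(\mathcal{P}) - f(n))/\sqrt{n} \to 0$ almost surely, completing the proof.

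The main obstacle, in my view, is ensuring that the boundary contributions in both the subadditivity and superadditivity steps are strictly of lower order than $\sqrt{n}$. This forces $k$ to grow with $n$ but more slowly than $\sqrt{n}$, so one must carefully trade off the requirement that $n/k^2 \to \infty$ (so each sub-square is dense enough for the rescaled problem to be well-behaved) against the requirement that $k = o(\sqrt{n})$ (so the $O(k)$ stitching cost is negligible); this balancing is what pins down the value of $\beta$ and controls the rate at which $f(n)/\sqrt{n}$ converges to it.
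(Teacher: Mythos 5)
The paper does not prove this statement at all: Theorem \ref{thm:random_tsp} is the classical Beardwood--Halton--Hammersley result and is simply imported by citation to \cite{Beardwood59}, so there is no internal proof to compare yours against. Your outline is, in spirit, the standard route to that classical theorem (geometric sub/superadditivity over a grid of $k^2$ cells, exact self-similarity of the Poisson process under rescaling, convergence of $f(n)/\sqrt{n}$, then an almost-sure upgrade), and the first two stages are essentially right, including the $O(k)$ stitching and boundary-closure accounting and the positivity of $\beta$ via nearest-neighbour distances.

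Two steps are glossed over in a way that matters. First, the concentration step: vanilla Azuma/bounded-differences needs a \emph{deterministic} bound on the effect of one point, but the add-one cost of TSP is only $O(1/\sqrt{n})$ typically, not always (a point can be isolated, and after conditioning the martingale differences can be of constant order). To get the sub-Gaussian tail $\exp(-\Omega(t^2))$ at scale $\sqrt{n}$ that your Borel--Cantelli application requires, you need either the Rhee--Talagrand isoperimetric concentration for the TSP functional, Steele's subadditive Euclidean functional / subadditive ergodic theorem machinery, or at least an Efron--Stein variance bound combined with higher moments; saying the increment is small ``with overwhelming probability'' does not feed into Azuma as stated. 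Second, the passage from $|f(n) - k f(n/k^2)| = O(k)$ to convergence of $f(n)/\sqrt{n}$ for \emph{all} $n$ (not just along the doubly-indexed subsequences $n = k^2 m$) needs a smoothness estimate for $f$, e.g.\ that adding a Poisson number of extra points of small intensity changes the expected tour length by a correspondingly small amount; this is standard but should be stated, since it is exactly what lets the two-parameter recursion pin down a single limit $\beta$. With those two repairs your sketch is a correct (and well-known) proof; for the purposes of this paper, though, the result is taken as a black box from the literature.
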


Assuming Theorems \ref{thm:ub_tsp_st} and \ref{thm:random_tsp}, we can prove Theorem \ref{thm:ptas_expct}.

\begin{proof}[Proof of Theorem \ref{thm:ptas_expct}]
  We focus on the Traveling Salesman case.
  Let $L$ be the tour produced by Algorithm \ref{algo:LS} and $T_{\opt}$ be the optimal tour.
  By Theorem \ref{thm:random_tsp}, we have that $\cost(T_{\opt}) = \mathcal{O}(\sqrt{n})$ with probability 1.
  Hence, Theorem \ref{thm:ub_tsp_st} implies 
  $$(1-\eps^2)\cdot \cost(L) \le \cost(T_{\opt}) + \mathcal{O}(\eps\sqrt{n}) = (1 + \mathcal{O}(\eps)) \cdot \cost(T_{\opt}).$$

  We now consider the random variable $ST_n(\mathcal{P})$ that denotes the length of the shortest Steiner Tree through the points in $\mathcal{P}$.
  Since the length of the optimal Steiner Tree is at least half the length of the optimal Traveling Salesman Tour,
  Theorem \ref{thm:random_tsp} implies that there exists a constant $\delta$ such that $ST_n(\mathcal{P})/\sqrt{n} \ge \delta$ with probability 1.
  Then, the exact same reasoning applies to prove the Steiner Tree case.
\end{proof}

The rest of the section is dedicated to the proof of Theorem \ref{thm:ub_tsp_st}.
To this aim,
we define a recursive dissection of the unit square according to a set of points $\mathcal{P}$.
At each step we cut the longer side of each rectangle produced by the previous step in such a way
that each of the two parts contains half the points of $\mathcal{P}$ that lie in the rectangle.
The process stops when each rectangle contains $\Theta (1/\eps^2)$ points of $\mathcal{P}$.
We now consider the final rectangles and we refer to them as \emph{boxes}.
Let $\mathcal{B}$ be the set of boxes.

\begin{lem}
  \label{lem:Karp77}
  \cite{Karp77} $\sum\limits_{b \in \mathcal{B}} |\partial b| = \mathcal{O}(\eps \sqrt{ |\mathcal{P}|})$, where $|\partial b|$ is the perimeter
  of box $b$ and $|\mathcal{P}|$ is the number of points in $\mathcal{P}$.
\end{lem}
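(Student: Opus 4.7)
The plan is to bound $\sum_B |\partial B|$ by first controlling the geometry of each box $B$ (specifically its aspect ratio) and then applying Cauchy-Schwarz. A preliminary counting step: since each split exactly halves the number of points of $\mathcal{P}$ in a rectangle and the recursion stops at $\Theta(1/\eps^2)$ points per box, the dissection has $L = \log_2(|\mathcal{P}|\eps^2) + \mathcal{O}(1)$ levels and produces $N = \Theta(|\mathcal{P}|\eps^2)$ boxes that tile the unit square, so $\sum_B \mathrm{area}(B) = 1$.

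The heart of the argument is to show that, thanks to the randomness of $\mathcal{P}$, every rectangle produced by the recursion has bounded aspect ratio (say at most $4$). Since $\mathcal{P}$ is Poisson, conditional on a rectangle $R$ of area $A$ containing $m$ points, those points are uniform in $R$, and splitting by the median coordinate along the longer side produces two sub-rectangles of areas $A\cdot(1/2 \pm \mathcal{O}(\sqrt{\log n/m}))$ with probability $1 - 1/\mathrm{poly}(n)$, by a standard concentration bound on the sample median. A union bound over the $\mathcal{O}(n)$ rectangles in the dissection yields, with probability $1-o(1)$, that every split is $(1/2 \pm 1/4)$-balanced geometrically. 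A simple induction on the levels then shows that every box has aspect ratio at most $4$: the root has ratio $1$, and cutting the longer side of a ratio-$\le 4$ rectangle with balance in $[1/4,3/4]$ produces pieces of ratio $\le 4$. With this in hand, any rectangle of area $A$ and aspect ratio at most $4$ has perimeter $\mathcal{O}(\sqrt{A})$, so by Cauchy-Schwarz,
$$\sum_{B \in \mathcal{B}} |\partial B| \;=\; \mathcal{O}\!\Bigl(\sum_{B} \sqrt{\mathrm{area}(B)}\Bigr) \;\le\; \mathcal{O}\!\Bigl(\sqrt{N}\,\sqrt{\textstyle\sum_{B} \mathrm{area}(B)}\Bigr) \;=\; \mathcal{O}(\sqrt{N}) \;=\; \mathcal{O}\bigl(\eps\sqrt{|\mathcal{P}|}\bigr).$$

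The main obstacle is the aspect ratio control: the median concentration is tightest near the leaves, where each rectangle retains only $\Theta(1/\eps^2)$ points, so for very small $\eps$ or very large $n$ a naive union bound is insufficient to keep every single split balanced. A standard workaround is to stop the dissection slightly earlier (when $\Theta((\log n)/\eps^2)$ points per rectangle remain) and absorb the bottom few levels through a coarser bound; this changes the number of boxes only by a logarithmic factor and so does not affect the asymptotic estimate. Once the geometric regularity is established everywhere above the leaves, the perimeter sum falls out of the Cauchy-Schwarz step with no extra logarithmic loss.
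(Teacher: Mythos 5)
There is a genuine gap: your argument proves (at best) a probabilistic version of a statement that the paper needs deterministically. Lemma~\ref{lem:Karp77} is invoked in Theorem~\ref{thm:ub_tsp_st}, which is explicitly a worst-case statement (``Consider an arbitrary set of points in $[0,1]^2$''), and in the Steiner case the dissection is even built on $\mathcal{P}\cup L_\steiner\cup T_\steiner$, whose points are certainly not uniform. Your whole route to bounded aspect ratio rests on the points being Poisson/uniform inside each rectangle, so it cannot yield the lemma as stated; and indeed for adversarial points the aspect ratio of Karp-type median cuts is genuinely unbounded (concentrate all points near one edge of a square), so no repair of the concentration step can save this intermediate claim. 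Even within the random model your patch is shaky: the median concentration fails exactly where most of the boxes live (the last $\Theta(\log\log n)$ levels), and ``absorb the bottom few levels through a coarser bound'' is unquantified --- the naive bound (each cut at most doubles the perimeter contribution per level) loses a $\mathrm{poly}\log n$ factor, and stopping at $\Theta((\log n)/\eps^2)$ points changes the dissection that the rest of the paper's analysis uses.

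The paper gives no proof and cites Karp, and Karp's argument shows that aspect-ratio control is unnecessary. The total perimeter of the leaves telescopes: it equals $|\partial B_{\mathrm{root}}|$ plus twice the total length of all cut segments, since each cut becomes a side of exactly two rectangles. Because every cut is perpendicular to the longer side, its length is the shorter side of the rectangle being cut, hence at most $\sqrt{\mathrm{area}}$ of that rectangle --- no balance condition needed. At depth $k$ there are $2^k$ rectangles tiling the unit square, so by Cauchy--Schwarz their cuts contribute at most $\sum_{\mathrm{depth}\,k}\sqrt{\mathrm{area}} \le \sqrt{2^k}$, and summing the geometric series over the $\log_2\Theta(\eps^2|\mathcal{P}|)$ levels gives $\mathcal{O}(\sqrt{\eps^2|\mathcal{P}|})=\mathcal{O}(\eps\sqrt{|\mathcal{P}|})$ deterministically. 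Your Cauchy--Schwarz step is the right instinct; the fix is to apply it to cut lengths via ``cut $=$ shorter side $\le\sqrt{\mathrm{area}}$'' rather than to box perimeters via an aspect-ratio bound you cannot establish.
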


For any set of segments $S$ and box $b$ and for each segment $s$, let $s_b$ be the part of $s$ that lies inside $b$.
We define $\ins(S, b) := \{ s_b \mid s \in S \text{ and } s \text{ has at least one endpoint in }b\}$ and
$\cross(S,b) := \{s_b \mid s \in S \text{ and } s \text{ has no endpoint in }b \}$.
Moreover we define $\out(S,b) := \{ s_{b'} \mid s \in S \text{ and } b \neq b'\}$.
Additionally, let $S(b) = \sum_{s \in S } \len(s_b)$.

We can now prove the two following structural Lemmas. See Fig. \ref{fig:lem_locality_TSP}  for an illustration of the proof.
\begin{figure}
\begin{center}
  \includegraphics{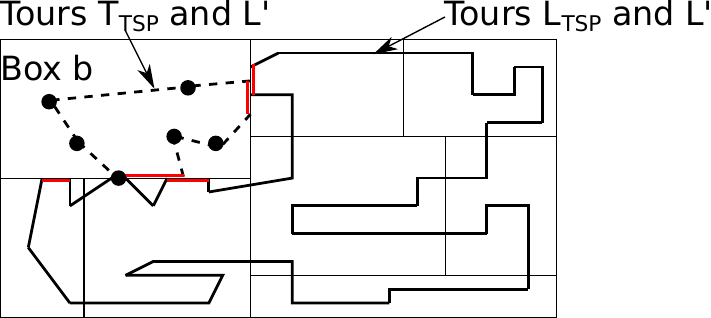}
  \caption{The solid black segments form the tour $L_\tsp$ outside $b$. The dotted line segments are the tour $T_\tsp$ inside $b$.
  The red segments are the one needed to connect the two tours.}
  \label{fig:lem_locality_TSP}
\end{center}
\end{figure}

\begin{lem}\label{lem:local_opt_st}
  Let $L_{\steiner}$ be a locally optimal solution to the Steiner Tree problem and let $T_{\steiner}$ be any Steiner Tree.
  Let $\mathcal{B}$ be a set of boxes produced by a dissection of $\mathcal{P} \cup L_\steiner \cup T_\steiner$.
  Using the same notation for a set of segments and their total length, we then have for any box $b \in \mathcal{B}$
  $$(1-\mathcal{O}(\eps^2)) L_\steiner(b) \le \ins(T_\steiner,b) + |\partial b| + L_\steiner/n,$$
  where $|\partial b|$ is the perimeter of $b$.
\end{lem}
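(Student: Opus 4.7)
The plan is to test local optimality of $L_\steiner$ against a specific modified neighbor $L'$, constructed by removing from $L_\steiner$ every Steiner vertex lying in $b$ and adding in their place every Steiner vertex of $T_\steiner$ lying in $b$. The dissection's stopping rule guarantees each box contains only $\Theta(1/\eps^2)$ points of $\mathcal{P} \cup L_\steiner \cup T_\steiner$, so $|L_\steiner \bigtriangleup L'| = O(1/\eps^2)$, matching Condition($L',\eps$) for Steiner Tree. Any violation of the cardinality constraint $|L'|\le n$ is cured by greedily removing a few redundant Steiner vertices, with the resulting slack absorbed into the $(1 - \mathcal{O}(\eps^2))$ factor in the conclusion.

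Next, I would upper-bound $\cost(L')$ by exhibiting an explicit connected spanning graph of $L' \cup \mathcal{C}$ obtained by gluing three pieces: first, $\out(L_\steiner, b)$, the outside-$b$ portion of the MST of $L_\steiner \cup \mathcal{C}$; second, $\ins(T_\steiner, b)$, the inside-$b$ portion of the MST of $T_\steiner \cup \mathcal{C}$; and third, the boundary $\partial b$ itself. Together they span all of $L' \cup \mathcal{C}$, because outside $b$ the vertex set of $L'$ agrees with that of $L_\steiner$ while inside $b$ it agrees with that of $T_\steiner$. For connectivity, every connected component of $\out(L_\steiner, b)$ (resp.\ $\ins(T_\steiner, b)$) is a subtree of a connected MST, so it either is the whole MST or must contain an edge crossing $\partial b$; either way $\partial b$ stitches all components into one connected graph. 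Its total length is
\[
\cost(L_\steiner) - L_\steiner(b) + \ins(T_\steiner, b) + |\partial b|,
\]
which is an upper bound on $\cost(L')$.

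Finally, the local-search stopping condition yields $(1 - 1/n)\,\cost(L_\steiner) \le \cost(L')$. Chaining this with the upper bound above and rearranging produces
\[
L_\steiner(b) \le \ins(T_\steiner, b) + |\partial b| + L_\steiner/n,
\]
which already implies the inequality in the lemma, the $(1 - \mathcal{O}(\eps^2))$ factor serving as buffer for enforcing $|L'| \le n$. I expect the main obstacle to be making the connectivity argument fully rigorous: one must carefully analyze MST edges crossing $\partial b$ to ensure that neither an inside subtree of $\ins(T_\steiner, b)$ nor an outside subtree of $\out(L_\steiner, b)$ gets orphaned from the rest of the structure, including the degenerate case where one MST is entirely on one side of $\partial b$.
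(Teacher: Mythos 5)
There is a genuine gap, and it sits exactly where the paper has to work hardest. Your exchange $L'$ swaps only the Steiner \emph{points} lying in $b$, so $|L_\steiner \bigtriangleup L'| = \mathcal{O}(1/\eps^2)$ is indeed guaranteed by the stopping rule; the problem is the claimed bound $\cost(L') \le \cost(L_\steiner) - L_\steiner(b) + \ins(T_\steiner,b) + |\partial b|$. The cost of $L'$ is $\mathrm{MST}(L' \cup \mathcal{C})$, a spanning tree whose vertices must all belong to $L' \cup \mathcal{C}$, whereas the glued network you exhibit is connected only through junction points that are \emph{not} in $L' \cup \mathcal{C}$: the corners of $b$ and the points where edges of $L_\steiner$ and $T_\steiner$ are clipped by $\partial b$. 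The length of a connected network with such extra junctions does not upper-bound the MST of the point set it spans (the MST can exceed the best Steiner-type network by up to the Steiner ratio; the three corners of an equilateral triangle joined through the Fermat point already show a factor above $1$), so the step ``its total length is an upper bound on $\cost(L')$'' is unjustified as written.

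The natural repair --- declaring those boundary junction points to be new Steiner vertices of $L'$, which is exactly what the paper does --- destroys your cardinality bound: edges of $L_\steiner$ (and $T_\steiner$) can cross $b$ with no endpoint inside it, and the dissection controls the number of \emph{points} per box, not the number of crossings, so $\partial b$ may carry far more than $\mathcal{O}(1/\eps^2)$ clip points. This is precisely why the paper's proof first splits $\cross(L_\steiner,b)$ into six classes, partitions each class into groups of $\Theta(1/\eps^2)$ consecutive crossing segments, and performs one local exchange per resulting sub-box $b_0$, adding the intersection points with $\partial b_0$ as Steiner vertices of the competitor; the sub-box boundaries cost an extra $\mathcal{O}(\eps^2 L_\steiner(b))$ beyond $|\partial b|$, which is the origin of the $(1-\mathcal{O}(\eps^2))$ factor in the lemma. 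That your argument would deliver the strictly stronger inequality $L_\steiner(b) \le \ins(T_\steiner,b) + |\partial b| + L_\steiner/n$, with no $\eps^2$ loss and without ever confronting the crossing segments, is a symptom of this missing step. (The degenerate case you flag --- a tree lying entirely on one side of $\partial b$ --- is, by comparison, a minor point.)
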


\begin{proof}
  For each box $b$, the segments of $\cross(L_\steiner,b)$ can be distributed into
  6 different classes according to which side of $b$ they intersect.
  
  We divide further. Since the segments of a class are pairwise disjoint,
  there is a natural ordering of the segments inside each class.
  For each class that contains more than $1/\eps^2$ segments, we partition them into
  subsets that contain $\Theta(1/\eps^2)$ consecutive segments
  (in the natural order of the class).
  We define a sub-box for each subset of each class as follows. Let $s$ and $s'$ be the two extreme segments of the set in the ordering of the class.
  The sides of the sub-box associated to
  this subset consists of $s$ and $s'$ and the two shortest paths $p,p'$ along the sides of $b$ that connects the endpoints
  of $s$ and $s'$.

  Remark that the sum of the lengths of the sides of all the sub-boxes is at most $|\partial b| + \mathcal{O}(\eps^2 L_\steiner(b))$.  
  For each sub-box $b_0$, let $L'$ be the set of vertices of $L_\steiner$ that are outside $b_0$, plus the set of vertices of $T_\steiner$ that are
  inside $b_0$,
  plus the set of the intersection points of the edges of $L_\steiner$ and $T_\steiner$  with the sides of $b_0$.
  Thus, $L' \le \out(L_\steiner, b_0) + \ins(T_\steiner,b_0) + |\partial b_0|$.
  Moreover, we have $|L_\steiner \bigtriangleup L'| = \mathcal{O}(1/\eps^2)$ and the local near-optimality argument
  applies.
  Namely, we obtain that $(1-1/n) L_\steiner \le L'$, and so
  $$ - 1/n \cdot L_\steiner + \ins(L_\steiner,b_0) + \cross(L_\steiner, b_0) \le \ins(T_\steiner,b_0) + |\partial b_0|.$$
  We now sum over all sub-boxes of box $b$ and we obtain
  $$ L_\steiner(b) = \ins(L_\steiner,b_0) + \cross(L_\steiner, b_0) \le \ins(T_\steiner,b) + |\partial b| + \mathcal{O}(\eps^2 L_\steiner(b)) + L_\steiner/n.$$
\end{proof}

\begin{lem}\label{lem:local_opt_tsp}
  Let $L_{\tsp}$ be a locally optimal solution to the Traveling Salesman problem and let $T_{\tsp}$ be any tour. 
  Let $\mathcal{B}$ be a set of boxes produced by a dissection of $\mathcal{P}$.
  Using the same notation for a set of segments and their total length, we then have for any box $b \in \mathcal{B}$
  $$(1-\mathcal{O}(\eps^2)) L_\tsp(b) \le \ins(T_\tsp,b) + 3|\partial b|/2 + L_\tsp/n,$$
  where $|\partial b|$ is the perimeter of $b$.
\end{lem}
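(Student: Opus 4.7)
The plan is to mirror the proof of Lemma~\ref{lem:local_opt_st}, with one essential modification: because the alternative solution $L'$ must be a valid tour rather than just a set of edges, the patching along each sub-box boundary costs an extra factor of $3/2$.

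First, I would copy the sub-box construction verbatim from the Steiner case. Partition the crossing segments $\cross(L_\tsp, b)$ into the (at most) $6$ classes indexed by the pair of sides of $b$ they join, and for each class of size larger than $1/\eps^2$ cut it into consecutive subsets of size $\Theta(1/\eps^2)$. Each such subset yields a sub-box $b_0$ whose boundary consists of the two extreme segments $s, s'$ of the subset and the two pieces $p, p'$ of $\partial b$ joining their endpoints. The same bookkeeping as in Lemma~\ref{lem:local_opt_st} gives $\sum_{b_0} |\partial b_0| \le |\partial b| + \mathcal{O}(\eps^2 L_\tsp(b))$.

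For each sub-box $b_0$, I would construct an alternative tour $L'$ by removing the fragments of $L_\tsp$ lying inside $b_0$, inserting the fragments of $T_\tsp$ lying inside $b_0$, and patching along $\partial b_0$. The symmetric difference $|L_\tsp \bigtriangleup L'|$ is $\mathcal{O}(1/\eps^2)$, matching the neighborhood size allowed by Condition, so local near-optimality yields $(1 - 1/n) L_\tsp \le L'$. Summing this inequality over all sub-boxes of $b$, exactly as at the end of the Steiner proof, then gives the claimed bound.

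The main obstacle is showing that the patching along $\partial b_0$ can be done at cost at most $3|\partial b_0|/2$. The intermediate graph $(L_\tsp \setminus b_0) \cup (T_\tsp \cap b_0)$ is a disjoint union of paths and cycles whose path endpoints are exactly the crossings of $L_\tsp$ and $T_\tsp$ with $\partial b_0$, each an even count by Jordan parity. I would first add $\partial b_0$ (cost $|\partial b_0|$), which connects the graph and leaves only these crossings as odd-degree vertices. Then I would add a cheapest cyclic matching of these odd-degree vertices along $\partial b_0$, whose cost is at most $|\partial b_0|/2$ (the two alternating consecutive matchings along the cycle sum to $|\partial b_0|$, so the cheaper is at most half). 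The resulting multigraph is connected and Eulerian, so an Euler tour, shortcut to skip auxiliary boundary points not in $\mathcal{P}$, produces a valid tour $L'$ of cost at most $\out(L_\tsp, b_0) + \ins(T_\tsp, b_0) + 3|\partial b_0|/2$. This extra $3/2$ is precisely what propagates into the final bound.
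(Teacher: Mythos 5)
Your proposal is correct and follows essentially the same route as the paper: build the sub-boxes as in the Steiner lemma, form an Eulerian graph from $\out(L_\tsp,b_0)$, $\ins(T_\tsp,b_0)$, $\partial b_0$, and a matching of the odd boundary vertices, bound its length by $\out(L_\tsp,b_0)+\ins(T_\tsp,b_0)+3|\partial b_0|/2$, invoke local near-optimality, and sum over sub-boxes. The only difference is that you spell out the two-alternating-matchings argument for the $|\partial b_0|/2$ bound, which the paper leaves implicit.
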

\begin{proof}
  We again further divide the boxes into sub-boxes as we did for Lemma \ref{lem:local_opt_st}. 
  For each sub-box $b_0$, we define a tour $L'$ obtained by a traversal of the following Eulerian graph.
  The graph vertices are $\mathcal{P}$, plus the corners of $\partial b_0$, plus all points of intersection of $L_\tsp$ and $T_\tsp$
  with $\partial b_0$.
  The edges are the segments of $\out(L_\tsp,b_0)$, plus the segments of $\ins(T_\tsp, b_0)$, plus $\partial b_0$ (so that the result is connected),
  plus a minimum length matching of the odd vertices of $\partial b_0$ (so that the result is Eulerian).
  Thus, $L' \le \out(L_\tsp,b_0) + \ins(T_\tsp,b_0) + 3|\partial b_0|/2$.
  
  Since the number of edges of $L$ intersecting $b_0$ is $\mathcal{O}(1/\eps^2)$ and the number of edges in $\ins(T_\tsp,b_0)$
  is $\mathcal{O}(1/\eps^2)$, we have $|L_\tsp \bigtriangleup L'| = \mathcal{O}(1/\eps^2)$ and the local near-optimality argument applies.  
  Namely, we obtain $(1-1/n) L_\tsp \le L'$, and so
  $$ - 1/n \cdot L_\tsp + \ins(L_\tsp,b_0) + \cross(L_\tsp,b_0) \le \ins(T_\tsp,b_0) + 3|\partial b_0|/2.$$
  We now sum over all sub-boxes of box $b$ and we obtain
  $$ L_\tsp(b) = \ins(L_\tsp,b) + \cross(L_\tsp,b)   \le \ins(T_\tsp,b) + 3|\partial b|/2 + \mathcal{O}(\eps^2 L_\tsp(b)) + L_\tsp/n.$$
\end{proof}


We can now prove Theorem \ref{thm:ub_tsp_st}.


\begin{proof}[Proof of Theorem \ref{thm:ub_tsp_st}]
  We first consider the Traveling Salesman case. Let $L_\tsp$ be a tour produced by Algorithm \ref{algo:LS} and $T_\tsp$ be any tour.
  Lemma \ref{lem:local_opt_tsp} implies that for any box $b$, we have
  $$(1-\mathcal{O}(\eps^2)) L_\tsp(b) \le \ins(T_\tsp,b) + 3|\partial b|/2 + L_\tsp/n.$$
  Since there are $\mathcal{O}(\eps^2 n)$ boxes in total, by summing over all boxes, we obtain
  $$-\mathcal{O}(\eps^2 L_\tsp) + \sum\limits_{b \in \mathcal{B}} L_\tsp(b) = (1 - \mathcal{O}(\eps^2)) L_\tsp \le \sum\limits_{b \in \mathcal{B}} (\ins(T_\tsp,b) + 3|\partial b|/2) \le T_\tsp + \frac{3}{2} \sum_{b \in \mathcal{B}} |\partial b|.$$
  By Lemma \ref{lem:Karp77}, $\sum_{b \in \mathcal{B}} |\partial b| = \mathcal{O}(\eps \sqrt{n})$  and so,
  $$(1-\mathcal{O}(\eps^2)) \cdot L_\tsp \le T_\tsp + \mathcal{O}(\eps \sqrt{ n}).$$
  
  To prove the Steiner Tree case, it is sufficient to notice that the total number of vertices in $\mathcal{P} \cup L_\steiner \cup T_\steiner$ is
  at most $3n$.
  It follows that the total number of boxes is $\mathcal{O}(\eps^2 n)$ and by Lemma \ref{lem:Karp77},
  $\sum_{b \in \mathcal{B}} |\partial b| = \mathcal{O}(\eps \sqrt{n})$.
  We apply a reasoning similar to the one for the TSP case to conclude the proof.

\end{proof}

Notice that we do not assume that the points are randomly distributed in the $[0,1]^2$ for the proofs of
Lemmas \ref{lem:local_opt_st} and \ref{lem:local_opt_tsp} and Theorem \ref{thm:ub_tsp_st}, thus they hold in the worst-case.

\begin{rem}
  One can ask whether it is possible to prove that the local search for TSP is a PTAS without the random input assumption.
  However, as shown in Fig. \ref{fig:lb_tsp} there exists a set of points such that there is a local optimum whose length
  is at least $(2-o(\eps)) \cost(\opt)$.
\end{rem}

\section{Clustering Problems}

  




We now tackle the analysis of the local search algorithm for some Clustering problems.
Recall that $L$ and $G$ denote the local and global optima respectively.
In the following, for each facility $l$ of $L$ (resp. $G$), we denote by
$V_L(l)$ (resp. $V_G(l)$) the Voronoi cell of $l$ 
in the Voronoi diagram induced by $L$ (resp. $G$).
We extend this notation to any subset $F$ of $L$, namely,
$V_L(F)$ denotes the union of the Voronoi cells of the facilities
of $F$ induced by $L$.
We define a recursive randomized decomposition (Algorithm \ref{algo:dissection}) based on $L $ and $ G$ (and the Voronoi cells induced by $L$).
This decomposition produces a tree encoded by the function Children(), where each node is associated to a region of the Euclidean plane.
In the first step of the dissection,  $B$ is the smallest square that contains all the facilities 
of $L \cup G$. At every recursive call of the procedure for $(B_r,L_r,G_r)$, the algorithm  maintains the following invariants:
\begin{itemize}
\item $B_r$ is a rectangle of bounded aspect ratio;
\item $L_r$ consists of all the facilities of $L$ that are contained in $B_r$;
\item $G_r$ consists of all the facilities of $G$ that are contained in $B_r$, plus some
  facilities of $G$ that belong to $V_L(L_r)$.
\end{itemize}

\begin{algorithm}[H]

\caption{Recursive Adaptive Dissection Algorithm}
\label{algo:dissection}
\begin{algorithmic}[1]
\Procedure{Adaptive\_Dissection}{$B, L, G, V_L$}
  \If{$|L| + |G| \ge 1/2\eps^2 $ }
      \If{$|L| > 1/2\eps$}
          \State \underline{Sub-Rectangle Process}:
          \State $B'\gets $ minimal rectangle containing all facilities of 
          $L$ in $B$
          \State $b' \gets$ maximum side-length of $B'$
          \State $B'_+ \gets$ Rectangle centered on $B'$ and extended by $b'/3$ in all four directions.
          \State $B''\gets B'_+\cap B$
          
          \State \underline{Cut-Rectangle Process}:
          \State $s''\gets $maximum side-length of $B''$
          \State $\ell \gets $line segment that is orthogonal to the side of length $s''$ and intersects it in a random position in the middle $s''/3$.
          \State Cut $B''$ into two rectangles $B_1$ and $B_2$ with $\ell$.
          \State

          \State Children($B$) $\gets \{B_1,B_2\}$
          \State $L_1 \gets L \cap B_1$ 
          \State $L_2 \gets L \cap B_2$
          \State $G_1 \gets G \cap \{g \mid g \in V_L(L_1) \text{ and } g \notin B_2\}$
          \State $G_2 \gets G \setminus G_1$
          \State \textsc{Dissection}($B_1$, $L_1$, $G_{1}$, $V_L$)
          \State \textsc{Dissection}($B_2$, $L_2$, $G_{2}$, $V_L$)
      \Else
          \State \underline{Partition Process}:
          \State Children($B$) $\gets$ Arbitrary partition of the facilities of $L \cup G$ in parts of size in $[1/2\eps^2, 1/\eps^2]$
    \EndIf  
\EndIf

\EndProcedure
\end{algorithmic}
\end{algorithm}

\textbf{Regions.} We now introduce the crucial definition of \emph{regions} of a dissection tree $\mathcal{T}$ of solutions $L$ and $G$.
For any node $N$ of the dissection produced by the Partition Process,
we consider that the associated rectangle is the bounding box of the
facilities of $L_N \cup G_N$.
We assign labels to the nodes of the tree. The label of a leaf $B$ 
is $|L_B| + |G_B|$. Then we proceed bottom-up, for each node of 
the tree, the labels of a node is equal to the sum of the labels of its two children.
Once a node has a label greater than $1/2\eps^2$, we say that this node is a \emph{region node} of the tree
and set its label to 0.
We define the regions according to the region nodes. For each region node $R$, the associated region is the rectangle defined by the 
node minus the regions of its descendants, namely minus the rectangles
of nodes of label 0 that are descendants of $R$. See Fig. \ref{fig:regions} for an illustration of the regions.
In the following, we denote by $\mathcal{R}$ the set of regions.


\textbf{Portals.} Let $\mathcal{D}$ be a dissection produced by 
Algorithm \ref{algo:dissection}.
For any region $R$ of $\mathcal{D}$ not produced by the Partition Process, 
we place $p$ equally-spaced \textit{portals} along each boundary of 
$R$.
We  refer to the dissection $\mathcal{D}$ along with the associated
portals as $\mathcal{D}_p$.
See Fig. \ref{fig:regions} for more details on the regions and portals.


\begin{figure}
  \centering
  \includegraphics[scale=0.3]{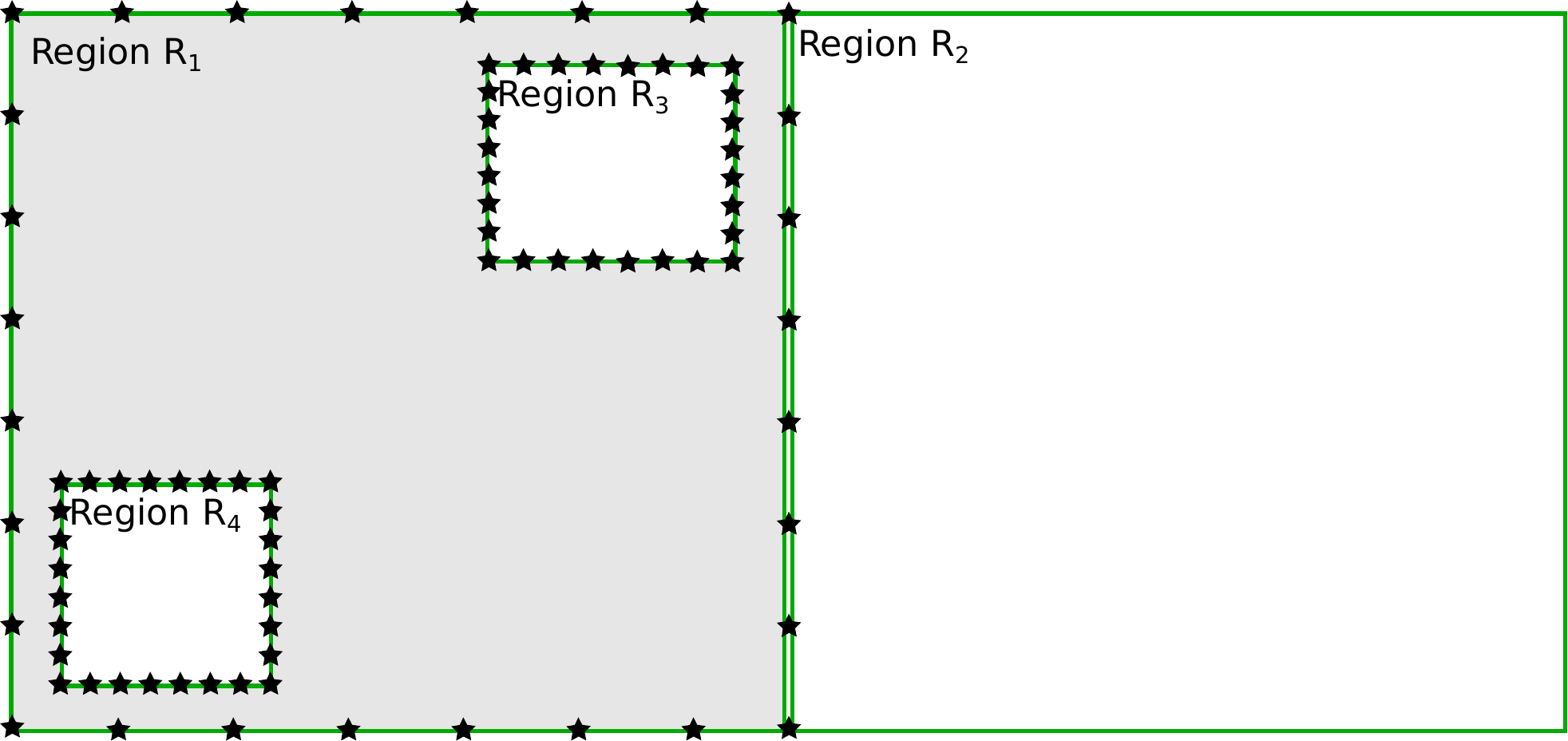}
  \caption{Details of the regions and portals associated to a dissection. The star-shaped points are the portals associated to Region $R_1$.
  Regions $R_2,R_3,R_4$ are the only regions sharing portals with region $R_1$. All the regions are disjoint.}
  \label{fig:regions}
\end{figure}

\textbf{Definitions and Notations.}
For any clustering problem, we denote by $\mathcal{C}$ the sets of the input points. 
We refer to an input point as a \textit{client}.
A solution to a clustering problem is a set of facilities $S \subset \mathds{R}^2$.

For any solution $S$ and any client $c$, we denote by $c_S$ the distance from client $c$
to the closest facility of $S$: $c_S = \min\limits_{s \in S} d(c,s)$.
The service cost of a solution $S$ to a clustering problem is $\sum\limits_{c \in \mathcal{C}} c_S$.
Additionally, for any solution $S$ and client $c$, we define $c(S)$ as the facility of $S$ that
serves $c$ in solution $S$, namely $c(S) := \argmin_{s \in S} d(c,s)$

Let $B$ be the smallest rectangle that contains all the clients.
Let $L$ and $G$ be two sets of facilities.
We now give the definition of an assignment which is crucial for the main proposition.
\begin{defn}
  We define an \textit{assignment} as a function that maps the clients to the facility of $L \cup G$.
\end{defn}
Let $E_0$ be the assignment that maps each client $c$ to the facility of $\{c(L),c(G)\}$ that is the farther,
namely, $\forall c \in \mathcal{C}$, $E_0(c)=\argmax (dist(c,c(G)),dist(c,c(L))).$

We show the following proposition which is the technical center of the proof.

\begin{prop}
  \label{thm:new_struct}
  Let $1/\eps^2 > 0$ be an integer, $G$ and $L$ be two sets of facilities.
  Let $\mathcal{D}_{1/\eps^2}$ be a dissection tree with portals.
  There exists an assignment $E$ that satisfies the following properties.
  Let $R$ be a region not produced by the Partition Process.
  If a client $c$ is such that $c(L) \in R$ and $c(G) \notin R$ then $E(c)$ is
  either a portal of $R$ or a facility of $L \setminus R$.\\
  Moreover, 
  $$\mathds{E} [\sum_{c \in \mathcal{C}} | dist(c,E(c))-dist(c,E_0(c))|]  =\sum\limits_{c \in \mathcal{C}} \mathcal{O}(\eps^2 \log(1/\eps^2) \cdot (c_G + c_L)).$$

\end{prop}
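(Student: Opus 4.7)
The plan is to define the assignment $E$ explicitly, modifying $E_0$ only where the structural requirement forces a reassignment, and then to bound the expected detour cost by a level-by-level analysis of the random cuts of the adaptive dissection, in the spirit of the Kolliopoulos--Rao technique this paper builds on.

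\textbf{Construction of $E$.} For each client $c$, let $R_L(c)$ denote the region containing $c(L)$. If $c(G) \in R_L(c)$, or if $R_L(c)$ is produced by the Partition Process, set $E(c) := E_0(c)$; in either subcase the hypothesis of the structural property fails for the only $R$ that could trigger it, namely $R = R_L(c)$, so nothing needs to be checked. Otherwise, let $x_c$ be the point where the segment $[c(L), c(G)]$ exits $R_L(c)$, let $p_c^{\star}$ be the portal of $R_L(c)$ nearest to $x_c$, and set $E(c) := p_c^{\star}$. Since $p_c^{\star}$ is a portal of $R_L(c)$, the structural property holds.

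\textbf{Per-client cost inequality.} For a client $c$ in the detour case, the reverse triangle inequality yields
\[ |\dist(c, E(c)) - \dist(c, E_0(c))| \le \dist(p_c^{\star}, E_0(c)) \le \dist(p_c^{\star}, x_c) + \dist(x_c, E_0(c)), \]
since $E_0(c) \in \{c(L), c(G)\}$ and $x_c$ lies on the segment between them. The first summand is at most the portal spacing on the relevant side of $\partial R_L(c)$, of the form $s_k/(4p)$ with $p = 1/\eps^2$ portals per side and $s_k$ the side length. The second summand is at most $|c(L)c(G)| \le c_L + c_G$, but contributes only when $c$ is actually in the detour case, an event whose probability is controlled by the random cut.

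\textbf{Random-cut analysis.} Fix a client $c$, write $\ell := |c(L)c(G)|$, and consider the cuts in the dissection tree on the path from $R_L(c)$ upward to the deepest ancestor whose rectangle contains both $c(L)$ and $c(G)$. At each level $k$ on this path, the cut is uniform in the middle third of a side of length $s_k$; elementary probability gives that the cut separates $c(L)$ and $c(G)$ with probability $\mathcal{O}(\ell/s_k)$, and conditioned on such a separation the expected snap distance is $\mathcal{O}(s_k/p)$. The level-$k$ contribution is therefore $\mathcal{O}(\ell/p) = \mathcal{O}(\eps^2 \ell)$, and summing over the $\mathcal{O}(\log(1/\eps^2))$ relevant levels yields per-client expected detour $\mathcal{O}(\eps^2 \log(1/\eps^2)(c_L + c_G))$. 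Linearity of expectation then gives the bound in the proposition. The main obstacle is justifying the number of relevant levels as $\mathcal{O}(\log(1/\eps^2))$ rather than the full dissection depth: this uses the Sub-Rectangle Process, which shrinks each rectangle tight around its $L$-facilities, together with the Partition Process halting the recursion once rectangles are small enough, so that only levels where $s_k$ lies within a controlled factor of $\ell$ contribute to the sum---this is the Kolliopoulos--Rao-style technical core that the paper adapts to the local-search setting.
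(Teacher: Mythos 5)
Your construction never uses the second option in the structural property (``a facility of $L \setminus R$''), and this is exactly where the gap lies. With pure portal-snapping, the per-level accounting is: the first cut separating $c(L)$ from $c(G)$ has length $s_k$ with probability $\mathcal{O}(\ell/s_k)$ (Lemma~\ref{lem:fact1}), and the snap penalty is $\Theta(s_k/p)$, so each dyadic scale contributes $\mathcal{O}(\ell/p)$ --- but the scales at which separation can occur range from roughly $\ell$ up to the side length of the bounding box, so the sum is $\mathcal{O}\bigl(\ell \log(\Delta)/p\bigr)$ where $\Delta$ is the spread of the instance, not $\mathcal{O}(\ell \log(p)/p)$. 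Your justification that only $\mathcal{O}(\log(1/\eps^2))$ levels are ``relevant'' does not hold: the Partition Process stops when few \emph{facilities} remain, not when rectangles are geometrically small relative to $\ell$, and the Sub-Rectangle Process shrinks rectangles around the $L$-facilities, which says nothing about the distance between $c(L)$ and $c(G)$; the cut separating them can be arbitrarily longer than $\ell$ (e.g.\ when $c(G)$ sits just outside a spread-out cluster of $L$-facilities).

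The paper's proof is built precisely to kill this tail. Let $d_u$ be the distance from $g:=c(G)$ to the closest $L$-facility $u$ in the relevant direction. Cut-Rectangle lines of length below $d_u/4$ must have length $\mathcal{O}(\ell)$ (Lemma~\ref{lem:4-generalized}, which is where the \emph{adaptive} dissection enters), Sub-Rectangle lines have length $\mathcal{O}(\ell)$ (Lemma~\ref{lem:sub-rect}), and portal-snapping is only used for cut lengths in $[d_u/4,\, p\,d_u]$ --- a range of ratio $\mathcal{O}(p)$, which is the sole source of the $\log p$ factor. For cuts longer than $p\,d_u$ the client is rerouted to $u$ itself when $u \notin R$ (cost $d_u$, paid with probability $\mathcal{O}(\ell/(p\,d_u))$, hence $\mathcal{O}(\ell/p)$ in expectation), and when $u \in R$ a further case analysis on the line first separating $u$ from $g$ is needed. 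Without this mechanism --- the nearby facility $u$, Lemma~\ref{lem:4-generalized}, and the $u\in R$ subcase --- the claimed bound does not follow, so the proposal has a genuine gap rather than an alternative proof.
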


We start by proving some properties of Algorithm \ref{algo:dissection}\footnote{Lemma \ref{lem:sub-rect} is essentially
  Lemma 4 from \cite{Kolliopoulos07} but a careful writing of the details of the calculation reveals slightly different constants.}.
The proofs of the following Lemmas are deferred to Appendix \ref{appx:struct_thm}.

\begin{defn}[Aspect Ratio]
  We define the aspect ratio of a rectangle $R$ that has sides of lengths $r$ and $r'$ as $\max(\frac{r}{r'},\frac{r'}{r})$.
\end{defn}

\begin{lem}\label{lem:aspect_ratio}
  Let $R$ be a rectangle produced by either the Sub-Rectangle or the Cut-Rectangle process of Algorithm \ref{algo:dissection}.
  The aspect ratio of $R$ is at most $5$.
\end{lem}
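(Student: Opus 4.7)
The plan is to prove the statement by induction on the depth of the recursive call in the dissection tree. The base case is the root, which is a square of aspect ratio $1$. For the inductive step I assume the input rectangle $B$ to the current recursive call has aspect ratio at most $5$, and argue separately about the Sub-Rectangle step (which produces $B''$) and the Cut-Rectangle step (which produces $B_1, B_2$).

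The Cut-Rectangle step is straightforward. Writing the sides of $B''$ as $p \leq q$ with $q/p \leq 5$, the cut is perpendicular to the longer side at a position in the middle third, so both $B_1$ and $B_2$ have sides $p$ and $x \in [q/3, 2q/3]$. A short case analysis on whether $x \geq p$ or $x < p$ yields $x/p \leq 2q/(3p) \leq 10/3$ and $p/x \leq 3p/q \leq 3$, so the aspect ratio is at most $10/3 \leq 5$. This also justifies that the inductive hypothesis on $B$ is preserved by the subsequent recursive calls.

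The Sub-Rectangle step is the main case. By symmetry between the two axes, I may assume $w' \leq h'$, so $b' = h'$ and $B'_+$ has horizontal size $w' + 2h'/3$ and vertical size $5h'/3$. Write $B''$ with dimensions $u \times v$ (horizontal $\times$ vertical) and $B$ with dimensions $W \times H$. The inclusions $B' \subseteq B'' \subseteq B'_+ \cap B$ yield $w' \leq u \leq \min(W, w' + 2h'/3)$ and $h' \leq v \leq \min(H, 5h'/3)$. If $u \geq v$, then $u \leq w' + 2h'/3 \leq 5h'/3$ (using $w' \leq h'$) and $v \geq h'$, so the aspect ratio is at most $5/3$. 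The interesting case is $u < v$, which I handle by a finer case analysis on how many of the two horizontal $h'/3$-extensions of $B'_+$ are clipped by the boundary of $B$: if at most one is clipped, then $u \geq h'/3$ and $v/u \leq (5h'/3)/(h'/3) = 5$; if both are clipped, then $u = W$ exactly, and the inductive hypothesis on $B$, in the form $H \leq 5W$, yields $v \leq H \leq 5W = 5u$.

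The main obstacle is precisely this last sub-case in which both horizontal extensions of $B'_+$ fall outside $B$: it is the only step where the inductive hypothesis on $B$ enters, and its correctness relies on the observation that clipping on both horizontal sides forces $u$ to collapse to $W$ exactly, so that the aspect ratio of $B$ directly controls that of $B''$. The symmetric case $w' > h'$ is handled identically after swapping the roles of the two axes, completing the induction.
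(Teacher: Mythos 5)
Your proof is correct and follows essentially the same route as the paper: induction over the dissection with the aspect-ratio hypothesis on the parent rectangle, an easy computation for Cut-Rectangle, and an analysis of $B''=B'_+\cap B$ for Sub-Rectangle in which the parent's aspect ratio is invoked exactly when the intersection with $B$ truncates the extensions. Your explicit case analysis on how many extensions are clipped is just a more carefully spelled-out version of the paper's $\min/\max$ ratio bound, and both yield the constant $5$.
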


\begin{lem}[\cite{Kolliopoulos07}]\label{lem:sub-rect}
Let $l \in L$ be a facility and $v \in \mathds{R}^2$ be any point. Let $d$ be the distance between $v$ and $l$. 
If a cutting line segment $s$ produced by the Sub-Rectangle process during Algorithm \ref{algo:dissection}
separates $v$ and $l$ for the first time, then $\len(s) \le 5d.$
\end{lem}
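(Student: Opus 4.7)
The plan is a short direct geometric comparison, using only the construction of $B''$ inside the Sub-Rectangle Process. I will first bound $\len(s)$ from above in terms of the enlargement parameter $b'$, then bound $d$ from below by the same parameter, and finally combine the two.

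First, I will unpack what ``cutting segment produced by the Sub-Rectangle process'' means. At the recursive call where the separation occurs, $B'$ is the minimal enclosing rectangle of $L\cap B$ with maximum side length $b'$, $B'_+$ is $B'$ extended by $b'/3$ in each of the four cardinal directions, and $B'' = B'_+\cap B$. A cutting segment $s$ produced by this process is a side of $B''$ that is \emph{not} already part of $\partial B$; such a side must therefore coincide with a side of $B'_+$. Since each side of $B'_+$ has length equal to the corresponding side of $B'$ plus $2b'/3$, and each side of $B'$ has length at most $b'$, I obtain $\len(s)\le b'+2b'/3=5b'/3$.

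Second, I will use the ``first time'' hypothesis. Because $s$ is the first cutting segment to separate $v$ from $l$, both points lie in the ambient rectangle $B$ when $s$ is introduced: otherwise an earlier segment from a prior recursive call would already have separated them. The facility $l\in L\cap B$ is contained in $B'$; and by construction of $B'_+$, the supporting line of $s$ lies at perpendicular distance exactly $b'/3$ from $B'$. Hence $l$ is at perpendicular distance at least $b'/3$ from this line, while $v$ lies on the opposite side of it. The Euclidean distance $d=\dist(v,l)$ is therefore at least the orthogonal gap separating the two sides, namely $d\ge b'/3$.

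Combining the two inequalities yields $\len(s)\le 5b'/3\le 5d$. The only delicate point in the argument is the identification of $s$ with a side of $B'_+$ rather than a segment of $\partial B$, which I handle via the ``first time'' hypothesis as described above; once that is settled the conclusion is a two-line chain of elementary inequalities, and I anticipate no substantive obstacle.
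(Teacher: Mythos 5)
Your proof is correct and follows essentially the same route as the paper's: you bound $\len(s)$ above by $5b'/3$ using the construction of $B'_+$, and bound $d$ below by $b'/3$ using that $l$ lies in $B'$ while $v$ lies beyond the offset side. Your treatment of why $s$ must be a side of $B'_+$ rather than a piece of $\partial B$ is in fact spelled out more carefully than in the paper.
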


\begin{lem}\label{lem:4-generalized}
Let $L$ be a set of facilities.
Let $v \in \mathds{R}^2$, $l \in L$, $d_0=\dist(v,l)$.
Suppose that, in Algorithm \ref{algo:dissection}, $v$ and $l$ are first separated by a line $s$ that is vertical and that $l$ is to the right of $s$.
Let $d_1$ be the distance from $v$ to the closest open facility located to its left.
Then, the length of $s$ is either:
(i) larger than $d_1/4$ or 
(ii) smaller than $12d_0$.

\end{lem}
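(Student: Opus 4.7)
My plan is to split on which process in Algorithm~\ref{algo:dissection} produced $s$. If $s$ is a cutting segment from the Sub-Rectangle process, Lemma~\ref{lem:sub-rect} directly yields $\len(s) \le 5 d_0 < 12 d_0$, which is conclusion~(ii), so the rest of the argument concerns Cut-Rectangle cuts.

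Suppose then that $s$ is produced by the Cut-Rectangle process acting on some rectangle $B''$. Since $s$ is vertical and orthogonal to the longer side of $B''$, the longer side is horizontal of length $s''$ and the shorter side equals $\len(s) = h$; Lemma~\ref{lem:aspect_ratio} gives $s'' \le 5h$, and the middle-third placement of $s$ gives $x_s \in [s''/3, 2 s''/3]$ after translating coordinates so that the left edge of $B''$ is at $x = 0$. Let $B'$ be the bounding box of the facilities of $L$ in $B$, with maximum side length $b'$; the Sub-Rectangle construction gives $x^{\text{left}}_{B'} \le b'/3$, $x^{\text{right}}_{B'} \ge s'' - b'/3$, and $b' \ge 3 s''/5$.

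I would then split further according to the leftmost facility $l^{*}$ of $L$ in $B'$. In the sub-case $x^{\text{left}}_{B'} < x_v$, the facility $l^{*}$ lies strictly to the left of $v$ and upper-bounds $d_1$. Using $x_v - x^{\text{left}}_{B'} \le x_v \le x_s \le 2 s''/3 \le 10 h/3$ together with $|y_v - y_{l^{*}}| \le h$ (both points are in $B''$), this gives $d_1 \le \sqrt{(10 h/3)^2 + h^2} = h \sqrt{109}/3 < 4 h$, so $\len(s) > d_1/4$, which is conclusion~(i). In the complementary sub-case $x^{\text{left}}_{B'} \ge x_v$, no facility of $L \cap B$ lies strictly to the left of $v$, and I aim at conclusion~(ii). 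The core chain of inequalities is $x_s - x_v \ge s''/3 - b'/3$ (from the middle-third placement together with $x_v \le x^{\text{left}}_{B'} \le b'/3$) and $x_s - x_v \le x_l - x_v \le d_0$ (since $l \in B'$ with $x_l \ge x_s$ and $\dist(v, l) = d_0$), which combine to $s'' - b' \le 3 d_0$.

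The main obstacle I expect is completing this final sub-case: converting $s'' - b' \le 3 d_0$ together with $b' \ge 3 s''/5$ and $s'' \le 5 h$ into $h < 12 d_0$ requires splitting on whether $b' = w'$ or $b' = h'$ for the two sides of $B'$. In the case $b' = h'$ one uses $h' \le h$ to upper-bound $b'$, and in the case $b' = w'$ one uses that the horizontal Sub-Rectangle extension must then be at most $3 d_0$. The stated constants $4$ and $12$ are chosen with enough slack to absorb the worst combination of these geometric estimates.
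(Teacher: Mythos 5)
Your overall strategy mirrors the paper's: handle Sub-Rectangle via Lemma~\ref{lem:sub-rect}, set up the $B'$/$B''$ geometry for Cut-Rectangle, show that a facility of $L$ strictly to the left of $v$ inside $B''$ forces $d_1 < 4\,\len(s)$ (the paper does the contrapositive: assume $\len(s) < d_1/4$, so the left piece of $B''$ has diameter $<d_1$ and hence contains no $L$-facility to the left of $v$), and then in the remaining sub-case bound $d_0$ from below. Your first sub-case is essentially correct, including the implicit fact that $v\in B''$ because a Sub-Rectangle side would otherwise have been the first separator.

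The problem is the final sub-case, which you explicitly leave open. Deriving $s''-b' \le 3d_0$ from $x_s\ge s''/3$ and $x_v\le b'/3$ is fine, but this inequality is vacuous whenever $b'$ is close to $s''$, and you do not supply the missing lower bound on $s''-b'$. What the paper actually does at this point is different and sharper: it bounds $x_v$ not by $b'/3$ but by $r/4$ (where $r=s''$), via the observation $r\ge \dist(\text{left}(B'),\text{left}(R))+b'\ge 4\,\dist(\text{left}(B'),\text{left}(R))$, and then concludes $d_0 \ge x_l-x_v\ge r/3 - r/4 = r/12 \ge \len(s)/12$. Your chain stops one step short of this: you never obtain $x_v \le s''/4$. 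Merely asserting that ``the stated constants $4$ and $12$ are chosen with enough slack'' is not a proof; you would have to actually carry out the split on which side of $B'$ is the long one, and in the orientation where the long side $b'$ of $B'$ is vertical it is not at all obvious that the needed inequality $s'' \ge g_L + b'$ (equivalently $g_L\le s''/4$) holds — the horizontal extent of $B'$ is then only $b''<b'$, so $s''=g_L+b''+g_R$ and there is no direct reason for $b''+g_R\ge b'$. You identified the right weak point, but you have not closed it, and the closing step is exactly what distinguishes the paper's completed argument from yours.
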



\begin{lem}[\cite{Kolliopoulos07}]\label{lem:fact1}
  Let Event$_0(d, s)$ denote the event that an edge $e$ of length $d$ is separated by a cutting line of
  side-length $s$ that is produced by Cut-Rectangle.\\
  Then, $Pr[\text{Event}_0(d, s)] \le 3d/s.$
\end{lem}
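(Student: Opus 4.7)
The plan is to exploit directly the randomness in the Cut-Rectangle process. Recall from Algorithm~\ref{algo:dissection} that this process picks a cutting segment $\ell$ orthogonal to the longest side of the current rectangle $B''$ (of length $s$), at a position chosen uniformly in the middle third of that side. Equivalently, the coordinate of $\ell$ along the axis perpendicular to the cut is a random variable $X$ that is uniform over an interval $I \subset \mathds{R}$ of length $|I| = s/3$.

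First I would set up coordinates so that $\ell$ is vertical and let the edge $e$ have endpoints $u=(u_x,u_y)$ and $v=(v_x,v_y)$ with $\len(e) = \|u - v\| = d$. The edge $e$ is separated by $\ell$ precisely when the cut coordinate $X$ lies strictly between $u_x$ and $v_x$; that is, $X \in J$, where $J$ is the open interval with endpoints $u_x, v_x$.

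Next I would bound the length of $J$. Since the orthogonal projection onto the $x$-axis is $1$-Lipschitz, $|J| = |u_x - v_x| \le \|u-v\| = d$. Because $X$ is uniform on $I$, this gives
\[
\Pr[\mathrm{Event}_0(d,s)] \;=\; \Pr[X \in J] \;\le\; \frac{|J \cap I|}{|I|} \;\le\; \frac{|J|}{|I|} \;\le\; \frac{d}{s/3} \;=\; \frac{3d}{s},
\]
which is the desired bound.

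The only subtlety is distinguishing the full edge length $d$ from the length of the projection onto the axis perpendicular to $\ell$ (which is what actually governs the probability); the $1$-Lipschitz property of orthogonal projection handles this in one line. Consequently there is no real obstacle here, and the statement follows from the definition of the Cut-Rectangle process together with the uniform distribution of the cut position over an interval of length $s/3$.
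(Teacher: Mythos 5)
Your proof is correct and rests on the same calculation as the paper's: the cut position is uniform over an interval of length $s/3$ while the projection of $e$ onto that axis has length at most $d$, giving the bound $3d/s$. The paper merely wraps this in an explicit walk down the random dissection tree to justify that the per-cut conditional bound transfers to the overall event (since which rectangle of side-length $s$ ends up containing $e$ is itself determined by earlier random cuts); your argument supplies exactly the conditional-probability core of that step.
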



We now show the proof of the Structure Theorem.

\begin{proof}[Proof of Proposition \ref{thm:new_struct}]
  Let $p := 1/\eps^2$.
  By linearity of expectation, we only need to show  this on a per-client basis. 

  Let $c$ be a client and $R$ a region containing $l :=c(L)$ but not $g :=c(G) $. 
  Let $B$ be the first box of the dissection, in top-down order, that contains $l$ but not $g$, 
  and let $s$ be the side of $B$ that is crossed by $[l,g]$. 
  We have: $dist(g,l)\leq dist(g,c)+dist(c,l)=c_G+c_L$. 
  Up to a rotation of center $g$, $l$ is to the north-west of $g$.
  Let $u, w$ be the closest facilities of $L$ respectively to the south and to the east of $g$.

  To construct $E$, we start with $E := E_0$, and modify $E$ one client at a time so that each client satisfies the first property,  and we bound
  the corresponding expected cost increase. The initial cost of $E$  is  
  $\sum\limits_{c \in \mathcal{C}} \max(c_G,c_L)$.  
  We modify $E(c)$ depending on whether $s$ is vertical or horizontal and according to the length of $s$.
  We first provide an upper bound on the expected cost increase induced by $E(c)$ for the case where $s$ is vertical.
  It is easy to see that, when $s$ is horizontal, applying the same reasoning on $w$ instead of $u$ leads to 
  an identical cost increase and thus, the total cost increase is at most twice the cost increase computed 
  for the case where $s$ is vertical.

  By Lemma \ref{lem:4-generalized}, the following cases cover all possibilities for the case where $s$ is vertical.
  \begin{itemize}
  \item $s$ is vertical and $s$ was produced by Sub-Rectangle. Then we define $E(c)$ as the portal on $s$ that is closest to $[g,l]$. By Lemma \ref{lem:sub-rect}, the cost increase   is  at most $\mathcal{O}((c_G+c_L)/p)$.
  \item $s$ is vertical and $s$ was produced by Cut-Rectangle  and its length is at most $12(c_L+c_G)$. Then again we define $E(c)$ as the portal on $s$ that is closest to $[g,l]$. By assumption, again the cost increase is at most $\mathcal{O}((c_G+c_L)/p)$.
  \item $s$ is vertical and $s$  was produced by Cut-Rectangle  and its length is greater that $12(c_L+c_G)$.
    Lemma \ref{lem:4-generalized} implies that $s$ has length greater than $d_{u}/4$.
    If the length of $s$ is in $[d_u/4,p d_u]$. Then again we define $E(c)$ as the portal on $s$ that is closest to $[g,l]$.  
    Let $\mathcal{E}_0$ be the event that $d_u/4 \le |s| \le p \cdot d_u$ and $s$ is vertical.
    The expected cost increase in this case is, by Lemma \ref{lem:fact1}, at most 
    $$\sum\limits_{\substack{d_u/4 \le i \le p \cdot d_u \\ \text{s.t } i/d_u \text{ is power of 2}}} pr[|s| = i\text{ and } \mathcal{E}_0] \cdot (i/p) \le
    \mathcal{O}(\log(p)/p \cdot (c_G + c_L)).$$


  \item We now turn to the last case. Namely, $s$ was produced by Cut-Rectangle and its length is greater than or equal to $p \cdot d_u$.
    We define $E(c)$ depending on whether $u$ is in $R$ or not. This leads to two different sub-cases.

    1. $u \notin R$.                
        Then we define $E(c):=u$.
        The cost is bounded by the cost to go to $g$ ($\max(c_G,c_L)$) plus
        the cost to go from $g$ to $u$, which is $d_u$.
        Let $\mathcal{E}_1$ be the event that $u \notin R$ and $p \cdot d_u < |s|$ and $s$ is vertical. 
        The cost increase is, by Lemma \ref{lem:fact1}, at most,
        $$\sum\limits_{\substack{i > p \cdot d_u  \\ \text{s.t } i/d_u \text{ is power of 2}}} pr[|s| = i\text{ and } \mathcal{E}_1] \cdot (d_u)
        \le \mathcal{O}((c_G + c_L)/p).$$

    %
    2. $u \in R$. 
        Let $d$ denotes the first line that separates $u$ from $g$. Since $u$ is to the right of $g$,
        $d$ is different from $s$ and has size at least $d_u$. We have two sub-cases.
        
        First, if $d$ was produced before $s$ in the dissection, then we also have $|d| > |s|$.
        Let $\mathcal{E}_2$ be the event $|d| > |s| > p \cdot d_u$ and $s$ is vertical.
        We now fix $d$. We assign $E(c)$ to be the closest portal on $R$, the expected cost increase conditioned upon $d$ is then at most:
        $$\sum\limits_{\substack{p \cdot d_u < i \le |d| \\ \text{s.t } i/d_u \text{ is power of 2}}} pr[|s| = i\text{ and } \mathcal{E}_2] \cdot (i/p)
        \le \mathcal{O}(\log(\frac{|d|}{p \cdot d_u}) \cdot (c_G + c_L)/p).$$
        
        We then remove the conditioning on $d$. 
        If $d$ was produced by the Sub-Rectangle process, then $p \cdot d_u < |d| \le 5d_u$ by Lemma \ref{lem:sub-rect} 
        and the expected cost increase is at most $\mathcal{O}((c_G+c_L)/p)$.
        Otherwise, $d$ was produced by the Cut-Rectangle process, and then the expected cost increase is at most 
        $$\sum\limits_{\substack{i > p \cdot d_u \\ \text{s.t } i/d_u \text{ is power of 2}}} pr[|d| = i\text{ and } \mathcal{E}_2] \cdot \mathcal{O}(\log(\frac{i}{p\cdot d_u}) \cdot (c_G + c_L)/p)
        \le
        \mathcal{O}((c_G + c_L)/p).$$\\
        
        Second, if $d$ was produced after $s$ in the dissection, namely $|s| > |d|$.
        Let $\mathcal{E}_3$ denote the event that $|s| > |d|$ and $|s| > p \cdot d_u$ and $s$ is vertical.
        We assign $c$ to the closest portal located on $d$, which is at distance at most $d_u + |d|/p$ from $g$ (and so at distance at most 
        $c_G + d_u + |d|/p$ from $c$).
        We start by fixing $s$. The expected cost conditioned upon $s$ is then (no matter how $d$ was produced), at most\\
        $$\sum\limits_{\substack{d_u < i < |s| \\ \text{s.t } i/d_u \text{ is power of 2}}} pr[|d| = i\text{ and } \mathcal{E}_3] \cdot (d_u + i/p)$$
        We then remove the conditioning on $s$, which leads to an expected cost of at most
        $$\sum\limits_{\substack{j > p \cdot d_u \\ \text{s.t } i/d_u \text{ is power of 2}}} pr[|s| = j \text{ and } \mathcal{E}_3]  \sum\limits_{d_u < i < j} 3(d_u/i) \cdot (d_u + i/p) \le         \mathcal{O}((c_G + c_L)/p)$$
        


    Thus, the total expected cost increase for $E$ is at most $\mathcal{O}((\log(p)/p) \cdot (c_G + c_L))$.

  \end{itemize}
  \end{proof}

\textbf{Partitioning the Clients and the Facilities.}
Before going further, we need to define a partition of the clients and 
the facilities
according to the dissection produced by Algorithm \ref{algo:dissection}.

We partition the clients into two sets $C_G$ and $C_L$. $C_G$ contains the clients
that are closer to a facility of $G$ than to a facility of $L$
and $C_L$ contains the rest of the clients, namely $C_G := \{c \mid c_L = \max(c_L,c_G) \}$ 
and  $C_L := \{c \mid c_G \neq \min(c_L,c_G)\}$.
Let $\mathcal{D}$ be a dissection produced by Algorithm 
\ref{algo:dissection} and the set of its associated regions $\mathcal{R}$.
For any region $R$, we denote $C_{G}(R)$ the set of clients that are served 
by $G_R$ in $G$ and that do not lay on a region not in $P$.
Furthermore, we define $C_L(R)$ as the set of clients that are served by $L_R$ in $L$ 
and let $C_R := V_{G}(G_R) \setminus \left( C_L \cap (V_L({L \setminus L_R}) \right)$
\footnote{This can be rewritten as $C_R := V_G(G_R) \cap (C_G \cup V_L(L_R))$.}.   
This set contains the clients served by $G_R$ in $G$ except those 
that belong to $C_L$ and that are served by $L \setminus L_R$ in $L$.
See Fig. \ref{fig:NR} for an illustration.
Additionally, we define $\Delta_R := V_L(L_R) \setminus V_G(G_R)$.

\begin{figure}
  \centering
  \includegraphics[scale=0.3]{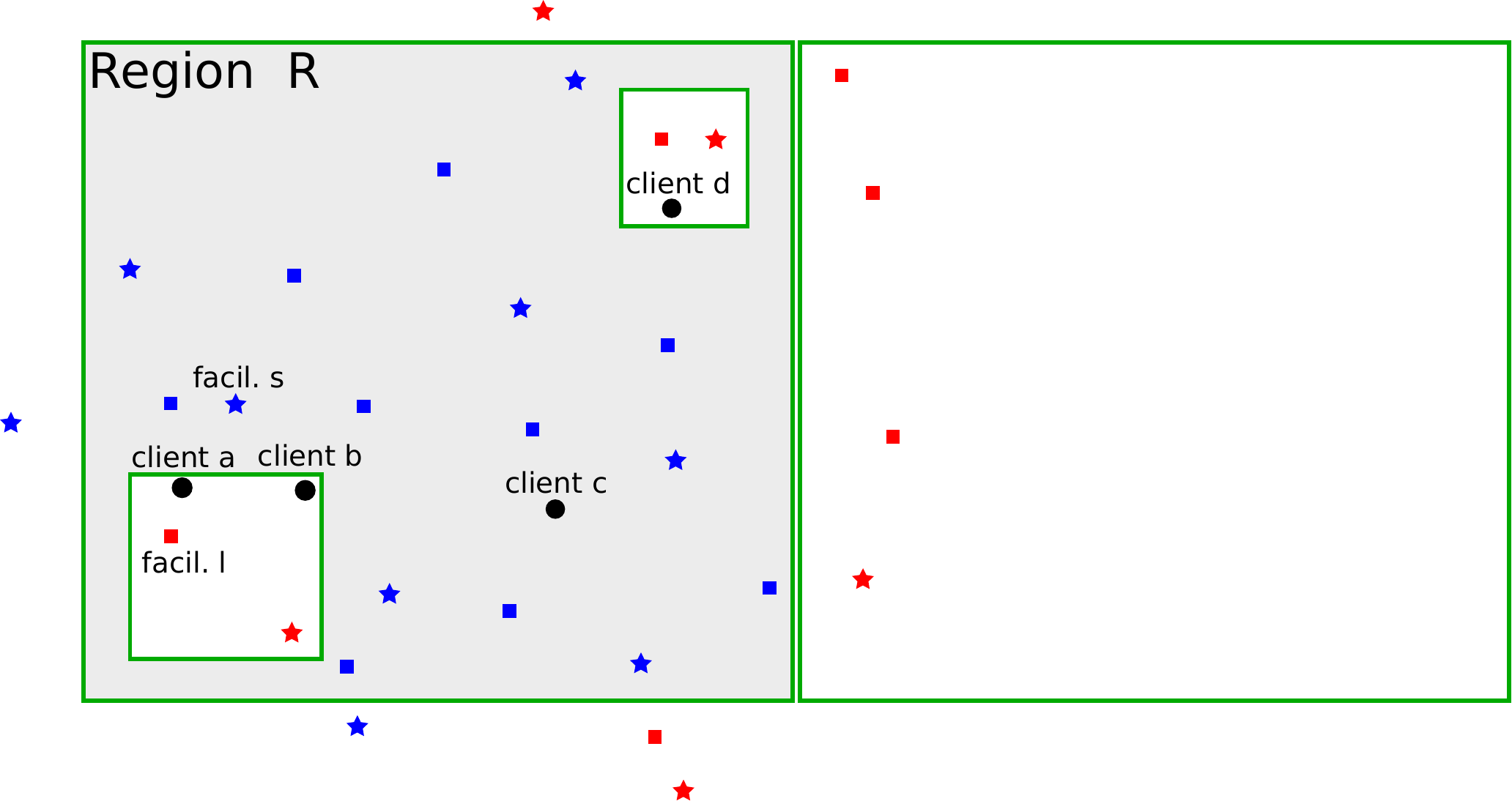}
  \caption{Details of the partitioning of the client.
    The star-shaped points are the facilities of $G$ and the square-shaped one are the facilities of $L$.
    The blue star-shaped and square-shaped belong to respectively $G_R$ and $L_R$.
    Since client $a$ is closer to facility $l$ than to facility $s$, it belongs to the set $C_L$.
    Moreover, it is served in $L$ by a facility that does not belong to $V_L(L_{R})$, and so, it is not included in set $C_{R}$. 
    Client $b$ is closer to facility $s$ than to facility $l$ and so, it is included in set $C_{R}$ albeit it is served by a facility located on 
    another region in $L$.
    Client $c$ is served by a facilities that belongs to $V_L(L_{R})$  (in $L$ and $G$) and so, it belongs to $C_{R}$. Finally, client $d$ 
    does not belong to $V_{G}(G_R)$ and so, is no included in set $C_{R}$.}
  \label{fig:NR}
\end{figure}

\subsection{Facility Location}

We now prove the approximation ratio of Algorithm \ref{algo:LS} for
facility location.

\begin{thm}
  \label{thm:FL}
  For Facility Location, Algorithm \ref{algo:LS} produces a solution $L$ of cost at most $(1 + \mathcal{O}(\eps)) \cdot \cost(\opt)$.
\end{thm}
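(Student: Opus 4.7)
Apply Algorithm~\ref{algo:dissection} to $(L,G)$ with $p=1/\eps^2$ portals per boundary side, where $L$ is the locally optimal solution produced by Algorithm~\ref{algo:LS} and $G:=\opt$; this produces a random region set $\mathcal{R}$ and associated portal sets $P_R$. For each non-Partition region $R$ consider the candidate swap
\[
L'_R \;:=\;(L\setminus L_R)\cup G_R\cup P_R,
\]
of symmetric difference $|L_R|+|G_R|+|P_R|=\mathcal{O}(1/\eps^2)$, within the $\mathcal{O}(1/\eps^3)$-neighborhood allowed for Facility Location. Local near-optimality yields $\cost(L)\le\cost(L'_R)+\cost(L)/n$. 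To upper bound $\cost(L'_R)$, route each client $c$ to a feasible facility of $L'_R$ as follows: clients with $c(L)\notin L_R$ keep $c(L)\in L\setminus L_R$ at cost $c_L$; clients with $c(L)\in L_R$ and $c(G)\in G_R$ move to $c(G)\in G_R$ at cost $c_G$; every remaining client (necessarily with $c(L)\in L_R$ and $c(G)\notin G_R$) is routed to $E(c)$, which by Proposition~\ref{thm:new_struct} lies in $P_R\cup(L\setminus R)\subseteq L'_R$. Partition-process regions, which carry no portals, are handled by the swap $(L\setminus L_R)\cup G_R$; since the condition $|L|\le 1/2\eps$ governing the Partition branch of Algorithm~\ref{algo:dissection} forces $|L_R|+|G_R|=\mathcal{O}(1/\eps^2)$, each client with $c(L)\in L_R$ can be rerouted to a nearby facility of $L\setminus L_R$ or $G_R$ at controlled cost.

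Sum the local-optimality inequalities over $R\in\mathcal{R}$. Using the telescopings $\sum_R|L_R|=|L|$ and $\sum_R|G_R|=|G|$ (each client's $c(L)$ lies in a unique region so service-cost changes telescope cleanly) and the bound $|\mathcal{R}|=\mathcal{O}(\eps^2(|L|+|G|))$ (each region has $\Omega(1/\eps^2)$ facilities by construction, so $|\mathcal{R}|\cost(L)/n$ contributes only $\mathcal{O}(\eps^2)\cost(L)$), I obtain after rearrangement
\[
\cost(L)\;\le\;\cost(G)+\sum_{R\in\mathcal{R}}|P_R|+\!\!\!\sum_{c\in C_{\mathrm{diff}}}\!\!\bigl(d(c,E(c))-c_G\bigr)+\mathcal{O}(\eps^2)\cost(L),
\]
where $C_{\mathrm{diff}}$ collects clients whose $c(L)$ and $c(G)$ lie in different regions. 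Taking expectations over the random cuts, Proposition~\ref{thm:new_struct} bounds $\mathds{E}\bigl[\sum_{c}|d(c,E(c))-\max(c_L,c_G)|\bigr]$ by $\mathcal{O}(\eps^2\log(1/\eps))(\cost(L)+\cost(G))$; for $c\in C_L$, one has $\max(c_L,c_G)=c_G$, so the contribution is absorbed directly into this error.

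\textbf{Main obstacle.} The principal remaining difficulty is to bound by $\mathcal{O}(\eps)\cost(G)$ the two terms that do not a~priori shrink: the cumulative portal cost $\sum_R|P_R|$, which could be as large as $\Theta(|L|+|G|)$ since each region has $\mathcal{O}(1/\eps^2)$ portals and there are $\Theta(\eps^2(|L|+|G|))$ regions, and the penalty $\sum_{c\in C_G\cap C_{\mathrm{diff}}}(c_L-c_G)$, which arises because $\max(c_L,c_G)=c_L\ne c_G$ whenever $c\in C_G$. A naive complementary add-swap $L\cup G_R$ bounds the $C_G$-penalty by $|G|+\mathcal{O}(\eps^2)\cost(L)$, which is only a constant-factor bound and is insufficient. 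The plan for overcoming this is to exploit the finer client partition ($C_R$, $\Delta_R$ introduced just before the theorem) together with the geometric structural lemmas~\ref{lem:aspect_ratio}--\ref{lem:fact1}: each portal and each $C_G$-penalty is charged to a crossing of a dissection line by an edge of the $G$-assignment, whose expected total length is, by Lemma~\ref{lem:fact1}, only an $\mathcal{O}(\eps)$ fraction of the transportation cost of $G$. Combining these ingredients yields $(1-\mathcal{O}(\eps))\cost(L)\le(1+\mathcal{O}(\eps))\cost(G)$ in expectation; since both sides are deterministic, this gives $\cost(L)\le(1+\mathcal{O}(\eps))\cost(\opt)$.
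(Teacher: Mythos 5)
Your overall strategy matches the paper's: build the dissection, apply Proposition~\ref{thm:new_struct}, perform a per-region swap $(L\setminus L_R)\cup G_R\cup P_R$, invoke local near-optimality, and sum over regions. However, the two ``main obstacles'' you flag are both resolved by the paper in a more elementary way than the charging argument you sketch, and one of them is in fact self-inflicted by your parameter choice.

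First, the portal cost. You set $p=1/\eps^2$, which makes the total number of portals $\Theta(|L|+|G|)$, exactly the obstruction you describe. The paper instead keeps $p$ as a free parameter, redefines $\eps:=\log(p)/p$ (so $p$ is roughly $(1/\eps)\log(1/\eps)$, not $1/\eps^2$), and builds an \emph{augmented} target solution $G$ consisting of $\opt$ together with one facility at every portal and the facilities $L_R$ of every Partition-process region. With only $\mathcal{O}(\eps^2(|L|+|G|))$ regions and $\tilde{\mathcal{O}}(1/\eps)$ extra facilities per region, the facility-cost overhead is $\tilde{\mathcal{O}}(\eps)(|\opt|+|L|)f\le\tilde{\mathcal{O}}(\eps)(\cost(\opt)+\cost(L))$ and is absorbed directly --- no charging of portals to dissection crossings via Lemma~\ref{lem:fact1} is needed. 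The swap size then stays within the allowed $\mathcal{O}(1/\eps^3)$ budget.

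Second, the $C_G$ penalty. The paper never pays $c_L-c_G$ for a client $c\in C_G$ whose $c(L)$ and $c(G)$ live in different regions $R_1\neq R_2$; instead it pays it out across \emph{two} local-optimality inequalities using exactly the $C_R$, $\Delta_R$ partition you mention at the end. Because $C_R=V_G(G_R)\cap(C_G\cup V_L(L_R))$, such a client $c$ is in $C_{R_2}$ (it lies in $V_G(G_{R_2})$ and in $C_G$) and contributes $c_G-c_L$ there, and is in $\Delta_{R_1}$ where it contributes $c_E-c_L$; since $E_0(c)=c(L)$ for $c\in C_G$, Proposition~\ref{thm:new_struct} makes this second term $\mathcal{O}(\eps(c_G+c_L))$ in expectation. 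Summing the two, the net contribution telescopes to $c_G-c_L+\mathcal{O}(\eps(c_G+c_L))$. This is a purely combinatorial accounting step, not a geometric argument invoking Lemma~\ref{lem:fact1}. So your proposal correctly identifies the necessary partition but replaces the paper's accounting mechanism with a speculative geometric charging that is neither used nor needed, and the portal cost issue disappears once you couple the number of portals $p$ to $\eps$ the way the paper does rather than fixing $p=1/\eps^2$.
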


\begin{proof}
  Let $\opt$ be a globally optimum solution and $L$ be a locally optimum solution.
  By Proposition \ref{thm:new_struct}, for any $p>0$ there exists an assignment $E$ for each random dissection $\mathcal{D}_p$ with portals 
  of $L \cup \opt$,
  such that for any client $c$ and region $R$, if $c(L) \in R$ and $c(G) \notin R$ then 
  $c$ is served by a portal of $R$ or a facility of $L \setminus R$ in $E$ and
  the expected cost of $E$ is at most $\mathds{E} = \sum\limits_{c \in \mathcal{C}} \max(c_L,c_G) +  \mathcal{O}(\log(p)/p \cdot (\sum\limits_{c \in \mathcal{C}} (c_G + c_L)))$.
  This implies that there exists a dissection $\mathcal{D}_p$ for which $E$ has value at most $\mathds{E}$.

  Throughout the proof, we  consider this dissection $\mathcal{D}_p$ and fix $\eps := \log(p)/p$.
  Let $\mathcal{R}$ be the set of regions associated to $\mathcal{D}_{p}$.
  We start by constructing a solution $G$ based on $\opt$ and we  compare the cost of $L$ to the cost of $G$.
  The solution $G$ contains all the facilities of $\opt$ plus some extra facilities.
  First, it has one facility at each portal of $\mathcal{D}_{p}$.
  Moreover, for each region $R$ that is produced by the Partition Process, 
  we open the facilities of $L_R$. Recall that for each of these regions, $|L_R| \le 1/\eps $.
  We keep the same assignment for the clients.
  Since there are $\mathcal{O}(\eps^2 (|G|+|L|))$ regions and that for each region $G$ uses at most $1/\eps$ extra facilities, 
  the cost of $G$ is at most Cost($\opt$)+ $\mathcal{O}(\eps (|\opt| + |L|)f)$.
  We now prove that the cost of $L$ is at most $(1 + \mathcal{O}(\eps))/(1 - \mathcal{O}(\eps))$ times the cost of $G$, namely
  $$
  |L| \cdot f + \sum\limits_{c \in \mathcal{C}} c_L \le (\frac{1 + \mathcal{O}(\eps)}{1 - \mathcal{O}(\eps)}) (|G|\cdot f 
  + \sum\limits_{c \in \mathcal{C}} c_G).$$


  We focus on the cost of a region $R$. 
  We show that, by local optimality, for each region $R$, replacing solution $L$ by solution $G$ does not lead to a much better cost.
  We serve the clients of $C_R$ optimally (namely by the facilities that serve them in $G$) and
  the clients of $L_{R} \setminus G_R$ by the facilities located on the portals of $R$ or by the facilities
  of $L \setminus L_R$, depending on whether they belong to $C_L$ or $C_G$ and according to the assignment $E$.  
  Since $|L_R \setminus G_R| + |G_R \setminus L_R|  = \mathcal{O}(\eps^{-3})$, the locality argument applies.
  Namely, we have
  $$
  (|G_R| - |L_R|)f + \sum\limits_{c \notin C_R \cup \Delta_R}c_L + \sum\limits_{c \in C_R} c_G + \sum\limits_{c \in \Delta_R} c_E 
  \ge (1-1/n)  (|L|f + \sum\limits_{c} c_L).
  $$
  The rest of the proof is mainly computational and can be found in 
  the appendix \ref{appx:FL}.

\end{proof}

\subsection{$k$-Median}
Let $L$ and $\opt$ be respectively local and global optimal solutions to
the $k$-Median problem.
We start with a technical Lemma which allows us to find
``clusters'' of regions of the plane that have roughly the same number of facilities
of $L$ and $G$. See Fig. \ref{fig:balanced_cluster} for an illustration.
The proof of the Lemma is deferred to Appendix \ref{appx:kmed}.

\begin{lem}[Balanced Clustering]
  \label{lem:kmed_partition}
  Let $\mathcal{R} = \{r_1,...,r_p\}$ be a collection of disjoint sets. Each set contains elements of type either $L$ or $G$
  and has size at least $1/2\eps^2$ and at most $1/\eps^2$. The total number of elements of type $L$
  is $(1+3\eps)$ times higher than the number of elements of type $G$.

  There exists a clustering of $\{r_1,...,r_p\}$ in clusters
  satisfying the following two properties. For any cluster $C$, 
  \begin{itemize}
  \item $C$ contains at most $\mathcal{O}(1/\eps^5)$ elements of $\mathcal{R}$, namely $|C| = \mathcal{O}(1/\eps^5)$;
  \item the difference between the number of elements of $L$ in the sets contained in $C$ 
    and the number of elements of $G$ in the sets contained in $C$ is at least $|C|/\eps$:
    $$\sum_{r_i\in C} |r_i \cap L| - \sum_{r_i\in C} |r_i \cap G| \geq |C|/\eps,$$
    for any $1/\eps \in \mathds{N}$.
  \end{itemize}


\end{lem}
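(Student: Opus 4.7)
The plan is to reduce the lemma to a one-dimensional balancing question. For each $r_i$, define its \emph{surplus} $s_i := |r_i \cap L| - |r_i \cap G|$ and its \emph{excess} $e_i := s_i - 1/\eps$. Since $|r_i|$ is an integer in $[1/(2\eps^2), 1/\eps^2]$ and $1/\eps \in \mathds{N}$ by hypothesis, each $e_i$ is an integer with $|e_i| \le 1/\eps^2 + 1/\eps \le 2/\eps^2$. The second conclusion then reads $\sum_{r_i \in C} e_i \ge 0$, so the task becomes: partition $\{r_1,\ldots,r_p\}$ into clusters of size $\mathcal{O}(1/\eps^5)$, each with non-negative total excess.

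The first step is to establish global feasibility, i.e.\ $\sum_i e_i \ge 0$. Expanding, $\sum_i e_i = (|L|-|G|) - p/\eps = 3\eps|G| - p/\eps$. The hypothesis $|L| = (1+3\eps)|G|$ fixes the numerator of the first term, and the size lower bound $|r_i|\ge 1/(2\eps^2)$ controls the number of parts $p$ in terms of $|L|+|G|$. The delicate part is to make the resulting numerical inequality tight: one uses the integrality of $1/\eps$ and of each $|r_i\cap L|, |r_i\cap G|$ to absorb the constants cleanly.

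The second step is the explicit construction. Split the parts into a \emph{supply} class $\{r_i : e_i \ge 0\}$ and a \emph{demand} class $\{r_i : e_i < 0\}$. Process demand sets in the order of most-negative-first: for each demand set $r_i$, open a new cluster $C$ containing it, and repeatedly adjoin supply sets (drawn in some order from a shared pool) to $C$ until the running excess $\sum_{r\in C}e_r$ is non-negative; then close $C$. Once all demand sets are handled, any leftover supply sets form singleton clusters, which automatically satisfy both conclusions.

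The size bound is immediate from the integrality: one demand set contributes a deficit of magnitude at most $2/\eps^2$, while every supply set contributes at least $+1$ to the running sum, so the cluster needs at most $2/\eps^2$ supply sets per demand set, giving $|C| \le 1 + 2/\eps^2 = \mathcal{O}(1/\eps^2)$, well within the $\mathcal{O}(1/\eps^5)$ target. The main obstacle is therefore not the construction itself but the feasibility estimate in the first step: the arithmetic that ties the hypothesis $|L|=(1+3\eps)|G|$ to the lower bound on $|r_i|$ is tight, and the integrality of $1/\eps$ appears essential for the numerical inequality $\sum_i e_i \ge 0$ to go through.
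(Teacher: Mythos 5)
Your reduction to the excess $e_i = |r_i\cap L| - |r_i\cap G| - 1/\eps$ and the per-cluster condition $\sum_{r_i\in C} e_i \ge 0$ is exactly the paper's starting point, but the construction you build on it has a genuine gap. A greedy scheme that opens a cluster on a demand set and \emph{closes it as soon as the running excess becomes non-negative} can strand surplus: each closed cluster may overshoot by up to roughly $1/\eps^2$ (the value of the last supply set added), and this overshoot is lost to all later demand sets. Since the global slack $\sum_i e_i$ may be arbitrarily small, the procedure can run out of supply while demand sets remain, and you give no rule for placing a demand set when no supply is left. Concretely, take one supply set with $e = +M$ (with $M \approx 1/\eps^2$) and $M$ demand sets with $e=-1$ each, so $\sum_i e_i = 0$: your greedy attaches the unique supply set to the first demand set, closes that cluster with excess $M-1$, and every remaining demand set is then uncoverable. (A smaller issue of the same kind: supply sets with $e_i=0$ do not contribute ``at least $+1$'', so the size bound as argued also fails unless you set them aside first, as the paper does.) The paper avoids waste entirely by a different combinatorial step: it groups $j$ sets of excess $+i$ with $i$ sets of excess $-j$, so each such cluster sums to exactly zero, and it dumps everything left over (once one sign class is depleted below a threshold) into a \emph{single} final cluster whose non-negativity and size are controlled by an invariant on the ratio of remaining $L$- and $G$-elements. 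That final cluster is precisely what forces the $\mathcal{O}(1/\eps^5)$ bound in the statement; your claim that $\mathcal{O}(1/\eps^2)$ suffices is a symptom that the hard leftover case is not being handled.

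Separately, your ``first step'' (global feasibility $\sum_i e_i \ge 0$) is asserted rather than proved, and the arithmetic you sketch does not close. From $|r_i| \ge 1/(2\eps^2)$ one only gets $p \le 2\eps^2(|L|+|G|)$, hence $p/\eps \le 2\eps(2+3\eps)|G| \approx 4\eps|G|$, which exceeds $|L|-|G| = 3\eps|G|$; so the inequality $\sum_i e_i \ge 0$ does not follow from the bound on $p$ alone, and integrality of $1/\eps$ does not rescue a constant-factor shortfall. You correctly flag this as the main obstacle, but flagging it is not resolving it. Moreover, even granting global feasibility, your construction needs more than $\sum_i e_i \ge 0$: because of the overshoot described above it implicitly needs slack proportional to the number of closed clusters, which is exactly what the hypothesis does not provide. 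Both halves of the argument therefore need the paper's machinery (or a substitute): a waste-free matching of excess classes, plus a quantitative handle on what remains at the end.
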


\begin{figure}
  \begin{center}
    \includegraphics[scale=0.25]{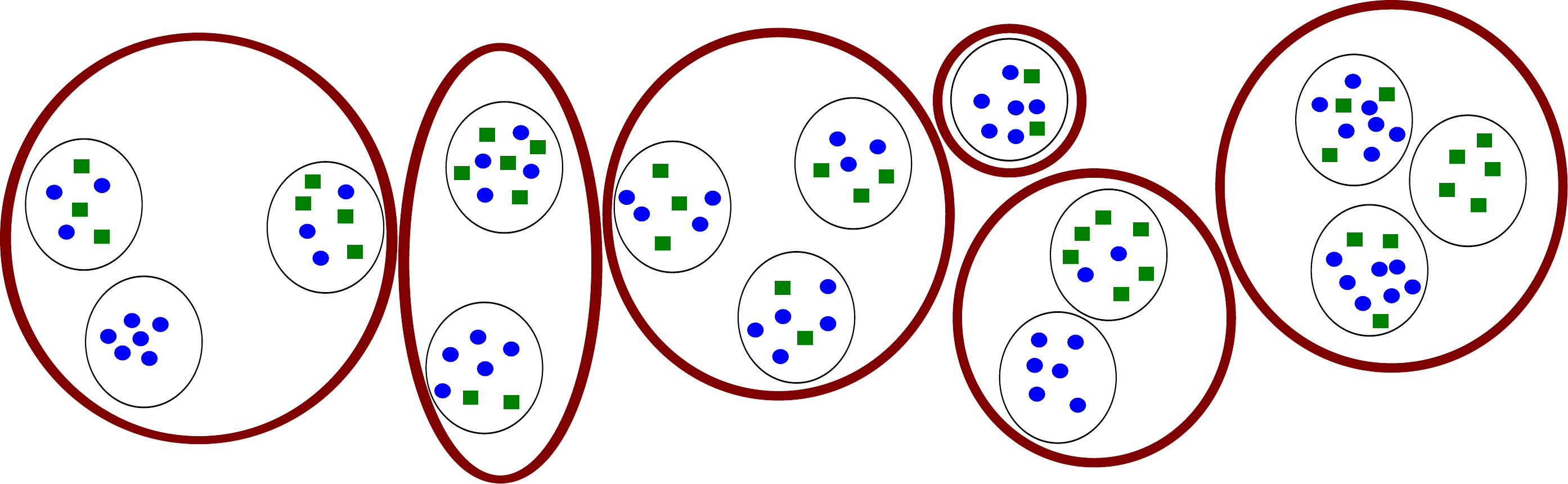}
    \caption{The circle-shaped points are the elements of type $L$ and the square-shaped ones the elements of type $G$.
      The black circles mark the sets $\{r_1,\ldots,r_p \}$ and the red ones show a clustering of those sets
      that satisfy the property of Lemma \ref{lem:kmed_partition}.
    }
    \label{fig:balanced_cluster}
  \end{center}
\end{figure}

\begin{thm}
  \label{thm:kmed}
  For $k$-Median, Algorithm \ref{algo:LS} for $k$-Median produces a solution $L$ of cost at most $(1 + \mathcal{O}(\eps) \cost(\opt)$ using at most $1 +
  \mathcal{O}(\eps))k$ Medians. 
\end{thm}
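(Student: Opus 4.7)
The plan is to mimic the facility-location proof (Theorem \ref{thm:FL}) with one extra ingredient: the Balanced Clustering Lemma (Lemma \ref{lem:kmed_partition}), which bundles regions together into larger swap units that respect the $(1+3\eps)k$ cardinality constraint. The difference with facility location is that here there is no opening-cost penalty to absorb the portals and partition-region refills; instead, we must rely on a surplus of $L$-facilities over $\opt$-facilities inside each swap unit.

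First, I would apply Proposition \ref{thm:new_struct} to $L$ and $G:=\opt$ with portal parameter $p=1/\eps^2$, obtaining a random dissection $\mathcal{D}_p$ with regions $\mathcal{R}$ and an assignment $E$ whose expected cost is
\[
\sum_c \max(c_L,c_G) + O(\eps)\sum_c (c_L+c_G).
\]
Fixing a realization that achieves at most the expectation, I would then apply Lemma \ref{lem:kmed_partition} to $\mathcal{R}$, where each region $R$ contributes $|L_R|$ elements of type $L$ and $|G_R|$ elements of type $G$. The size bound $|L_R|+|G_R|\in[1/(2\eps^2),1/\eps^2]$ is built into the definition of a region, and the ratio bound $|L|\le(1+3\eps)|\opt|$ follows from feasibility (padding $\opt$ with dummy copies at existing facility sites if needed). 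The lemma then produces clusters $\mathcal{C}$ of regions, each cluster $C$ containing $O(1/\eps^5)$ regions and satisfying the surplus inequality $\sum_{R\in C}|L_R|-\sum_{R\in C}|G_R|\ge|C|/\eps$, which will absorb all the portals we open.

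Second, for each cluster $C$ I would propose a candidate solution $L'$ obtained from $L$ by removing $L_C:=\bigcup_{R\in C}L_R$, adding $G_C:=\bigcup_{R\in C}G_R$, opening a facility at every portal of every region in $C$, and re-adding $L_R$ for each partition-process region $R\in C$. The total count of portals plus partition-region refills is $O(|C|/\eps)$, which is dominated (with appropriately chosen constants) by the surplus from Lemma \ref{lem:kmed_partition}, so $|L'|\le(1+3\eps)k$; and $|L\bigtriangleup L'|=O(1/\eps^7)=O(1/\eps^9)$, so Condition($L',\eps$) holds. Clients in $C_R$ are reassigned to their $G$-facility, clients in $\Delta_R$ are reassigned via $E$ (to a portal of $R$ or to a facility of $L\setminus L_R$), and all remaining clients keep their $L$-facility, which is untouched by the swap since they lie outside $V_L(L_C)$. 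Local near-optimality then yields, exactly as in Theorem \ref{thm:FL} but without opening-cost terms,
\[
(1-1/n)\sum_c c_L \;\le\; \sum_{c\notin\bigcup_{R\in C}(C_R\cup\Delta_R)} c_L \;+\; \sum_{c\in\bigcup_{R\in C} C_R}c_G \;+\; \sum_{c\in\bigcup_{R\in C}\Delta_R} c_E.
\]
Summing this inequality over all clusters, each client is counted a bounded number of times because the regions of $\mathcal{D}_p$ are disjoint, and the $c_E$-terms contribute at most the expected excess cost of $E$, namely $O(\eps)(\cost(\opt)+\cost(L))$; rearranging gives $\cost(L)\le(1+O(\eps))\cost(\opt)$.

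The main obstacle I anticipate is the cardinality accounting at the cluster level: one must tune the constants so that the $|C|/\eps$ surplus strictly dominates all extra facilities introduced by the swap (portals for every region in the cluster and refills for partition regions), and simultaneously verify that the per-region reassignment used in Theorem \ref{thm:FL} still makes sense when many regions are swapped together. Both issues are resolved by the disjointness of regions and by the generous slack $O(1/\eps^5)$ versus $O(1/\eps)$ that Lemma \ref{lem:kmed_partition} provides; this is precisely why the lemma is stated with room to spare, and why the Condition budget for $k$-median is $O(1/\eps^9)$ rather than the $O(1/\eps^3)$ used for facility location.
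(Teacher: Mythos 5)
Your overall route is exactly the paper's: invoke Proposition \ref{thm:new_struct} to get the portal-respecting assignment $E$, fix a dissection achieving the expectation, apply the Balanced Clustering Lemma \ref{lem:kmed_partition} to bundle regions into clusters whose $L$-surplus pays for the extra facilities, perform one swap per cluster (drop $L_C$, add $G_C$, portals, and the $L_R$ of Partition-Process regions), and sum the local-optimality inequalities using the split of $\Delta_R$ into $C_L$ and $C_G$. So there is no new idea here relative to the paper; the differences are in the cardinality bookkeeping, and two of your accounting steps are off as written.

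First, your justification of the hypothesis of Lemma \ref{lem:kmed_partition} is backwards. The lemma needs a \emph{surplus} of type-$L$ elements, i.e.\ the number of facilities of $L$ must exceed the number of facilities of $\opt$ by a factor $(1+3\eps)$; that surplus is what the clusters redistribute. Feasibility of $L$ only gives the opposite bound $|L|\le(1+3\eps)k$, and padding $\opt$ with dummy copies \emph{increases} the type-$G$ count, which makes the hypothesis harder, not easier, to satisfy. What is actually needed is that the local optimum (essentially) exhausts its budget, $|L|\approx(1+3\eps)k$; when $|L|$ is well below the cap the lemma's hypothesis fails and one must instead argue that each individual swap has slack $(1+3\eps)k-|L|$ in the cardinality constraint (the paper itself only remarks that $L$ uses $(1+\mathcal{O}(\eps))k$ facilities, but at least does not assert the wrong direction). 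Second, with your explicit choice $p=1/\eps^2$ the number of portals opened in a cluster is $\Theta(p)$ per region, i.e.\ $\Theta(|C|/\eps^2)$, which is \emph{not} absorbed by the surplus $|C|/\eps$ of the lemma, so the candidate $L'$ would violate $|L'|\le(1+3\eps)k$ and Condition($L',\eps$) could not be invoked; this is not a matter of constants but of a $1/\eps$ factor. The fix is the paper's parametrization: take $p$ so that $\log(p)/p=\mathcal{O}(\eps)$ (roughly $p\approx\eps^{-1}\log(1/\eps)$), which keeps the Proposition's error term $\mathcal{O}(\eps)(c_L+c_G)$ per client while making the per-region portal count comparable to the per-region surplus $1/\eps$. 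With those two repairs your argument matches the paper's proof.
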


\begin{proof}
  Remark first that solution $L$ uses $(1+\mathcal{O}(\eps))k$ facilities.
  We now show that the cost of solution $L$ is at most 
  $1 + \mathcal{O}(\eps)$ times higher than the cost of the optimal solution.\\
  Recall that by Proposition \ref{thm:new_struct}, for any $p>0$ there exists an assignment $E$ for each random dissection $\mathcal{D}_p$ 
  of $L \cup \opt$ with portals, such that for any client $c$ and region $R$, if $c(L) \in R$ and $c(\opt) \notin R$ then 
  $c$ is served by a portal of $R$ or a facility of $L \setminus R$ in $E$ and
  the expected cost of $E$ is at most $\mathds{E} = \sum\limits_{c \in \mathcal{C}} \max(c_L,c_\opt) +  \mathcal{O}(\log(p)/p \cdot (\sum\limits_{c \in \mathcal{C}} (c_\opt + c_L)))$. \\
  This implies that there exists a dissection $\mathcal{D}_p$ for which $E$ has value at most $\mathds{E}$.
  Throughout the proof, we  consider such a dissection $\mathcal{D}_p$ and fix $\eps := \log(p)/p$.
  Let $\mathcal{R}$ be the set of regions associated to $\mathcal{D}_{p}$.
  We prove that the cost of $L$ is at most $(1 + \mathcal{O}(\eps))/(1 - \mathcal{O}(\eps))$ times the cost of $S$, namely
  $$
  \sum\limits_{c \in \mathcal{C}} c_L \le \frac{1 + \mathcal{O}(\eps)}{1 - \mathcal{O}(\eps)} \sum\limits_{c \in \mathcal{C}} c_\opt.
  $$


  Let $\mathcal{P}$ be a clustering of the regions satisfying the properties of Lemma \ref{lem:kmed_partition} 
  (depending on $L$ and $\opt$).
  We start by constructing a solution $G$ based on $\opt$ and we  compare the cost of $L$ to the cost of $G$.
  We construct $G$ in a similar way to in the proof of Theorem \ref{thm:FL}.
  Namely, the solution $G$ contains all the facilities of $\opt$ plus some extra facilities:
  one facility at each portal of $\mathcal{D}_{p}$ and for each region $R$ that is produced by the Partition Process, 
  we open the facilities of $L_R$. Recall that for each of these regions, $|L_R| \le 1/\eps $.
  We keep the same assignment for the clients.
  We now compare the costs of $L$ and $G$. To do so, we  consider all the regions of each cluster of the clustering $\mathcal{P}$
  at the same time. Namely for each cluster $R$, $L$ uses at least as many facilities as $G$.
  Therefore $|S_P \setminus L| + |L \setminus S_P| = \mathcal{O}(1/\eps^9)$ and the locality argument applies.
  The rest of the proof is similar to the proof of \ref{thm:FL} and is mainly computational and can be found in Appendix \ref{appx:kmed}.

\end{proof}

\textbf{Higher Dimensions.}
Previous results generalize to any dimension $d$. It leads to  
algorithms that have exponential dependency in $d$. 
More precisely, for any dimension $d$, more portals are needed to maintain the expected cost increase
for the assignment $E$ provided by the Structure Theorem.
Each of the $2d$ faces of each region has to count $p^{d-1}$ portals.
Proposition \ref{thm:new_struct} generalizes to any dimension 
$d$ with $\mathcal{O}(d p^{d-1})$ portals instead of $p$.
For Facility Location, Condition($S',\eps$) has to be adapted to $|S \setminus S| + |S \setminus S| = \mathcal{O}(d/ \eps^{d+1})$.
Thus, Theorem \ref{thm:FL} still applies to show that the adapted Algorithm provides a $(1+\mathcal{O}(\eps))$ approximation. 
For the $k$-Median problem, Condition($S',\eps$) has to be adapted to $|S'| \le (1+3\eps)k$ and 
$|S \setminus S'| + |S' \setminus S| = \mathcal{O}(d/\eps^{7+d})$.
Theorem \ref{thm:kmed} still applies to prove the approximation ratio of the adapted Algorithm.





\bibliographystyle{plain}
\bibliography{euclideanFL}

\newpage

\appendix
\section{The Traveling Salesman Problem}
\label{appx:fig_tsp}
\begin{figure}[H]
  \centering
  \includegraphics[scale=0.3]{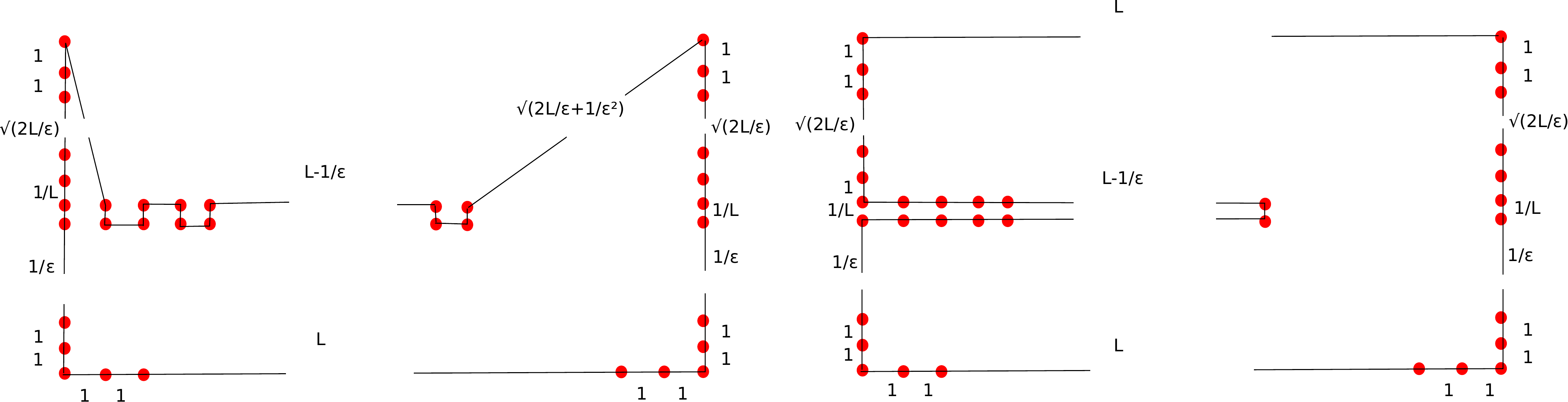}
  \caption{The tour on the right is k-optimal for any $k = o(\sqrt{L})$ but is $(2 - \mathcal{O}(1/k))$ times longer 
    than the tour on the left.}
  \label{fig:lb_tsp}
\end{figure}

\section{Polynomial-Time Local Search Algorithms}
\label{appx:ptime}

\begin{lemappx}[\ref{lem:polytime}]
  The number of iterations of Algorithm \ref{algo:LS} is polynomial for the Facility Location, the $k$-Median, the Traveling Salesman and the Steiner Tree Problems.
\end{lemappx}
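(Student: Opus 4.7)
The plan is to combine two ingredients: a geometric decrease in cost per iteration and the (very loose) bound on the cost of the initial feasible solution permitted by the algorithm. The while-loop condition guarantees that every iteration replaces $S$ by $S'$ with $\cost(S') \le (1 - 1/n)\cost(S)$. Chaining this inequality over $t$ iterations yields $\cost(S_t) \le C_0 \cdot (1-1/n)^t$, where $C_0$ is the cost of the initial solution. Since every feasible solution has cost at least $\opt$, we get $\opt \le C_0 (1-1/n)^t$, and using the elementary estimate $-\ln(1-1/n) \ge 1/n$ this forces
$$t \;\le\; n \, \ln(C_0 / \opt).$$

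To conclude $t = \mathrm{poly}(n)$, I would note that the algorithm allows any initial feasible $S$ with $C_0 = \mathcal{O}(2^n \opt)$, which plugs into the bound above to yield $t = \mathcal{O}(n^2)$. What remains is to exhibit, for each of the four problems, a polynomial-time construction of a feasible $S$ with $C_0 \le \mathcal{O}(2^n \opt)$; in fact $C_0 = \mathcal{O}(n)\cdot \opt$ is easy in each case. For TSP take any tour on the input; its length is at most $n$ times the diameter $D$, while $\opt \ge D$. For Steiner Tree take the MST on the terminals, whose cost is $\mathcal{O}(nD) = \mathcal{O}(n)\opt$ since $\opt \ge D/2$. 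For Facility Location open a single facility at an arbitrary client and assign every client to it, giving cost $f + \sum_c d(c,c_0) \le f + nD$; because any feasible solution must open at least one facility (cost $\ge f$) and must pay at least $D/2$ in service whenever two clients are at distance $D$, this is $\mathcal{O}(n)\opt$. For $k$-Median open $k$ arbitrary clients as facilities; the cost is at most $nD = \mathcal{O}(n)\opt$ (the degenerate case $\opt = 0$ is trivial).

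There is no genuine obstacle: the lemma is a bookkeeping statement around the multiplicative improvement enforced by the stopping rule. The main step is the one-line chain of inequalities above, and the per-problem verification of the initial-cost bound is essentially a single sentence for each problem. The bound on the number of iterations is thus $\mathcal{O}(n^2)$ uniformly across all four settings.
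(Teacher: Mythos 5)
Your core argument --- chaining the factor-$(1-1/n)$ improvement to get $t \le n\ln(C_0/\opt)$ and then bounding the initial cost --- is the same as the paper's proof, which bounds the number of iterations by $\log(\cost(S_0)/\cost(L))/\log\bigl(\tfrac{1}{1-1/n}\bigr)$ and then argues that the logarithm of the cost ratio is polynomial in the input size (after rescaling the distances). Since the algorithm itself stipulates an initial solution of cost $\mathcal{O}(2^n\opt)$, your $t=\mathcal{O}(n^2)$ bound already follows from that stipulation, and to this extent the lemma is proved.

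However, the part you yourself flag as ``what remains'' --- the per-problem constructions achieving $C_0=\mathcal{O}(n)\cdot\opt$ --- is wrong in two of the four cases. For facility location, opening a single facility need not cost $\mathcal{O}(n)\cdot\opt$: take two clients at distance $D$ with opening cost $f\ll D$; the optimum opens a facility at each client and pays $2f$, while your solution pays $f+D$, and $D/f$ is unbounded. Your claim that any solution ``must pay at least $D/2$ in service'' ignores solutions that open more than one facility. (The fix is easy: open a facility at every client, for cost $nf\le n\cdot\opt$.) For $k$-median the claim $nD=\mathcal{O}(n)\cdot\opt$ is simply false: $\opt$ can be as small as the distance between the two closest input points, which is arbitrarily small compared to the diameter $D$, so no bound of the form $\mathrm{poly}(n)\cdot\opt$ on such an initial solution holds uniformly. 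Here you must argue as the paper does: $\log(\cost(S_0)/\cost(L))$ is polynomial in the bit size of the input (equivalently, rescale the distances), which still yields polynomially many iterations, but not an $\mathcal{O}(n^2)$ bound independent of the coordinates.
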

\begin{proof}
  Let $\cost(L)$ denote the cost of a locally optimum solution and $\cost(S_0)$ denotes the cost of the initial solution, 
  then the number of steps in the algorithm is at most
  $$\frac{\log(\cost(S_0) / \cost(L))}{\log(\frac{1}{1-1/n})} .$$
  Since the cost of any minimal solution $S_0$ is at most $\mathcal{O}(n)$ (up to re-scaling the distances) and
  as $\log(n)$ and $\log(\cost(L))$ are polynomial in the input size,
  the algorithm terminates after polynomially many local search steps which are executed in polynomial time.
\end{proof}

\section{The Structure Theorem}
\label{appx:struct_thm}

\begin{lemappx}[\ref{lem:aspect_ratio}]
  Let $R$ be a rectangle produced by either the Sub-Rectangle or the Cut-Rectangle process of Algorithm \ref{algo:dissection}.
  The aspect ratio of $R$ is at most $5$.
\end{lemappx}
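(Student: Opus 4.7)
The plan is to prove the statement by induction on the recursion depth of Algorithm~\ref{algo:dissection}. The initial rectangle is a square containing $L\cup G$, so the base case is trivial. For the inductive step I assume the input rectangle $B$, with sides $a_h$ (horizontal) and $a_v$ (vertical), has aspect ratio at most $5$, and separately bound the aspect ratio of the outputs of the two processes.

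For the Sub-Rectangle step, let $B'$ have sides $p_h \le p_v$, so $b' = p_v$. Then $B'_+$ has sides $p_h + 2b'/3$ horizontally and $p_v + 2b'/3 = 5b'/3$ vertically. Denoting the sides of $B'' = B'_+ \cap B$ by $h$ and $v$, the inclusions $B' \subseteq B'' \subseteq B'_+$ give $p_h \le h \le p_h + 2b'/3$ and $b' \le v \le 5b'/3$. The easy direction is $h/v \le (p_h + 2b'/3)/b' \le 5/3$. For $v/h$ I would use the following dichotomy: either $h \ge b'/3$, so $v/h \le (5b'/3)/(b'/3) = 5$; or $h < b'/3$, which forces both of the two $b'/3$-extensions of $B'$ horizontally to be truncated by the boundary of $B$, so $B''$ inherits the horizontal extent of $B$ (that is, $h = a_h$), and then $v/h \le a_v/a_h \le 5$ by the inductive hypothesis on $B$. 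Hence $B''$ has aspect ratio at most $5$.

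For the Cut-Rectangle step, $B''$ has aspect ratio at most $5$ by the previous paragraph. Let its sides be $x \le s''$ with $s''/x \le 5$. The cut splits the longer side into two pieces of lengths in $[s''/3, 2s''/3]$, so each child has sides $x$ and $\lambda s''$ with $\lambda \in [1/3, 2/3]$. A routine calculation gives aspect ratio $\max(\lambda s''/x,\, x/(\lambda s'')) \le \max(2s''/(3x),\, 3x/s'') \le \max(10/3,\, 3) = 10/3 < 5$, closing the induction.

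The main obstacle is the Sub-Rectangle case: using only the sides of $B'_+$ one cannot prevent the aspect ratio of $B''$ from blowing up when $p_h$ is very small relative to $b'$. The dichotomy above handles precisely that bad configuration: whenever $h$ becomes so short that the internal bound $v/h \le 5$ would fail, the rectangle $B''$ must coincide with $B$ in the horizontal direction, at which point one inherits the aspect-ratio bound on $B$ by induction.
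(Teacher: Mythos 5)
Your proof is correct and follows essentially the same route as the paper: induction on the dissection depth, bounding the Sub-Rectangle output by playing the aspect ratio of $B$ (when $B$ truncates the extension) against the fatness of $B'_+$ (when it does not), and a direct computation for Cut-Rectangle giving ratio at most $10/3$. The only cosmetic difference is that your explicit dichotomy on $h \ge b'/3$ versus $h < b'/3$ replaces the paper's one-line inequality bounding a ratio of minima by the maximum of the corresponding ratios.
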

\begin{proof}
  We show by induction that at each step of the Dissection algorithm, the rectangle produced by the Sub-Rectangle or the Cut-Rectangle
  process has an aspect ratio of at most $5$.
  It is true at the first step of the algorithm since we consider the smallest square that contains all the facilities.
  We suppose that it is true up to depth $i$ of the Dissection algorithm and let $B$ be the input rectangle for step $i+1$. 
  Consider the Sub-Rectangle process applied to $B$.
  Let $B'$ be the smallest rectangle that contains the facilities of $B$ and $b'$ be the larger side of $B'$ and
  $s'$ be the smaller one.
  Let $r$ be the length of the side of $B$ that is parallel to $b'$ and $r'$ be the side of $B$ that is parallel to $s'$.
  Let $B''$ be the rectangle produced by the Sub-Rectangle process. The ratio of its two sides is at most 
  $$\max(\frac {\min(r, b' + 2b'/3)} {\min(r', s' + b'/3)},\frac{\min(r', s' + 2b'/3)}{\min(r, b' + b'/3)})$$
  We use the fact that
  $$\max(\frac{\min(a,b)}{\min(c,d)}) \le \max(\min(\frac{a}{c}, \frac{b}{d})).$$
  Thus, the ratio is at most
  $$  \max(\min\frac{(b' + 2b'/3)}{(s' + b'/3)}, \frac{r}{r'}, \frac{r'}{r}) \le 5$$
  since $1/5 \le r/r' \le 5$ by induction hypothesis and $(5b'/3)/(s'+b'/3) \le 5$.\\

  We now show that a rectangle produced by the Cut-Rectangle process has an aspect ratio of at most $5$.
  The Cut-Rectangle process on a rectangle that has side of length $s$ and $s'$, with $s \ge s'$, produces two rectangles that have 
  a side of length in $[s/3, 2s/3]$. The other side has length $s'$. 
  Thus, the aspect ratio of the new rectangles is at most $\max(\frac{2s/3}{s'},\frac{s'}{s/3})$.
  By induction hypothesis, we have $s/s' \le 5$ and so, since $s' \le s$, the aspect ratios of the new rectangles are at most $5$.

\end{proof}

\begin{lemappx}[\ref{lem:sub-rect}]
Let $l \in L$ be a facility and $v \in \mathds{R}^2$ be any point. Let $d$ be the distance between $v$ and $l$. 
If a cutting line segment $s$ produced by the Sub-Rectangle process during Algorithm \ref{algo:dissection}
separates $v$ and $l$ for the first time, then $\len(s) \le 5d.$
\end{lemappx}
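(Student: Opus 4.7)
The plan is to unpack the geometric construction of the Sub-Rectangle process and then compare $|s|$ with $d$ by locating $v$ relative to $B'$. Since $s$ separates $v$ and $l$ for the first time in the recursive dissection, both $v$ and $l$ must lie in the input rectangle $B$ to the recursive call in which $s$ is produced. By definition $B'$ is the minimal rectangle containing every facility of $L$ in $B$, so $l \in B' \subseteq B'' = B'_+ \cap B$. As $s$ is a side of $B''$ that separates $v$ from $l$, we immediately get $v \notin B''$.

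The key observation is that $v$ in fact lies outside the extended rectangle $B'_+$, not merely outside $B''$. Indeed, a segment ``produced'' by the Sub-Rectangle process, as opposed to one inherited from $\partial B$, must come from $\partial B'_+$ and lie in the interior of $B$. Suppose without loss of generality that $s$ is the right side of $B''$, so that $l$ lies weakly to the left of the right side of $B'$, while $v$ lies strictly to the right of the right side of $B'_+$. Since $B'_+$ was obtained by extending $B'$ by $b'/3$ to the right, the horizontal separation between $v$ and $l$ is at least $b'/3$, and hence $d \geq b'/3$.

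To bound $|s|$ from above, I would note that the side of $B'_+$ on which $s$ lies has length equal to (the length of the corresponding side of $B'$) plus $2 b'/3$. Since every side of $B'$ has length at most $b'$ by definition of $b'$, this is at most $b' + 2b'/3 = 5b'/3$. Because $B'' \subseteq B'_+$, the side $s$ inherits the same upper bound, giving $|s| \leq 5b'/3 \leq 5d$ as required.

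The argument is essentially a direct unpacking of the construction; I do not anticipate any genuine obstacle. The only point that needs a little care is the interior/boundary distinction for $s$: it is precisely the fact that $s$ lies in the interior of $B$ (and therefore coincides with a face of $B'_+$ rather than of $B$) that forces $v$ to lie strictly beyond $B'_+$, which is what buys the lower bound $d \geq b'/3$ and makes the factor $5$ work out exactly.
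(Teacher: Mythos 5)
Your proof is correct and follows essentially the same route as the paper's: bound $\len(s)$ by the longest side of $B'_+$, namely $5b'/3$, and use the $b'/3$ margin between $B'$ and $\partial B'_+$ to get $d \ge b'/3$. Your explicit remark that a segment separating two points of $B$ ``for the first time'' must come from a side of $B'_+$ interior to $B$ (not from $\partial B$) is exactly the detail the paper leaves implicit.
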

\begin{proof}
  Let $s$ be the first line that separates $l$ from $v$ in the dissection. 
  Let $B$ be the last rectangle that contained both $v$ and $l$.
  Let $B'$ be the minimal rectangle that contains all the facilities of $L \cap B$ 
  and let $B''$ be the square centered on $B'$ produced by Algorithm 
  \ref{algo:dissection}.
  $B''$ is thus the square that produced $s$. 
  Since the Sub-Rectangle process focus on the intersection between $B''$ and $B$, the length of $s$ is at 
  most the side-length of $B''$. 

  Since $v$ is not in $B'$ and there is no facility in the outer fifths
  of the rectangle, $l$ is thus located on the middle part of $B''$ 
  and it follows that $\len(s)/5 \le d$.

  


  
\end{proof}

\begin{lemappx}[\ref{lem:4-generalized}]
Let $L$ be a set of facilities.
Let $v \in \mathds{R}^2$, $l \in L$.
Suppose that, in Algorithm \ref{algo:dissection}, $v$ and $l$ are first separated by a line $s$ that is vertical and that $l$ is to the right of $s$.
Let $d_0=\dist(v,l)$ and $d_1$ be the distance from $v$ to the closest facility of $L$ located to its left.
Then, the length of $s$ is either:
(i) larger than $d_1/4$ or 
(ii) smaller than $12 d_0$.
\end{lemappx}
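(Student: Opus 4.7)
I would split on whether $s$ comes from the Sub-Rectangle process or the Cut-Rectangle process of Algorithm~\ref{algo:dissection}. If $s$ is produced by Sub-Rectangle, applying Lemma~\ref{lem:sub-rect} directly to the pair $(v,l)$ yields $\len(s)\le 5 d_0 < 12 d_0$, so case~(ii) holds.

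Otherwise $s$ is produced by the Cut-Rectangle process on some rectangle $B''$. I may assume $\len(s)\ge 12 d_0$, since otherwise case~(ii) holds; then the goal is to show $\len(s)>d_1/4$. I would place coordinates so that $B''=[0,s'']\times[0,h]$ with $s''$ its maximum side length and $h=\len(s)$, and so that the vertical cut lies on the line $x=x_0$ for some $x_0\in[s''/3,2s''/3]$ (the middle-third rule). By Lemma~\ref{lem:aspect_ratio}, $s''\le 5h$. Since $v$ and $l$ lie on opposite sides of $s$ and $\dist(v,l)=d_0$, the point $v$ satisfies $v_x\in(x_0-d_0,x_0)$; the hypothesis $h\ge 12 d_0$ combined with $s''\ge h$ then gives $s''/12\ge d_0$, and in particular $v_x>s''/3-d_0>0$.

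The heart of the argument is to exhibit a facility of $L$ lying strictly to the left of $v$ at distance less than $4h$. I would take $u_0$, the leftmost facility of $L\cap B$, where $B$ is the parent rectangle on which the Sub-Rectangle process was invoked to produce $B''$. Since all facilities of $L\cap B$ lie in $B'\subseteq B''$, the $x$-coordinate of $u_0$ equals the left margin $m_L\in[0,b'/3]$ of $B'$ inside $B''$, where $b'$ is the maximum side of $B'$; one also has $b'\le s''$. A case analysis on the orientation of $b'$ (horizontal versus vertical) and on whether $B'_+$ is trimmed by $B$, combined with the inequality $s''/12\ge d_0$, establishes $m_L< v_x$, so that $u_0$ indeed lies strictly to $v$'s left. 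The distance bound then follows from $d_1\le\dist(v,u_0)\le\sqrt{(2s''/3)^2+h^2}\le\sqrt{(10h/3)^2+h^2}=\tfrac{\sqrt{109}}{3}h<4h$, giving case~(i).

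The main obstacle is the arithmetic verification that $m_L<v_x$ in all configurations, especially when $B'$ nearly fills $B''$ horizontally so that $m_L$ approaches $s''/3$. The slack allowed by $h\ge 12 d_0$ is exactly what is needed, but tracking the two orientations of the max side of $B'$ and the amount by which $B'_+$ is trimmed by the surrounding rectangle $B$ is what makes the bookkeeping delicate. Once that numerical inequality is in hand, the Pythagorean bound with $s''\le 5h$ gives the desired constant $\sqrt{109}/3<4$.
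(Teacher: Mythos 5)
Your outer structure (Sub-Rectangle case via Lemma~\ref{lem:sub-rect}; Cut-Rectangle case via the diagonal bound $\sqrt{(2s''/3)^2+h^2}\le \tfrac{\sqrt{109}}{3}h<4h$ from Lemma~\ref{lem:aspect_ratio} and the middle-third rule) matches the paper's, but the step you defer --- deducing $m_L<v_x$ from $h\ge 12d_0$ --- is a genuine gap, not bookkeeping, and with the constant $12$ it fails in identifiable configurations. The issue is that the maximum side $b'$ of $B'$ need not be parallel to the maximum side of $B''$: if $B$ trims $B'_+$ from above and below, $B''$ can be wider than tall even though $B'$ is taller than wide. Concretely, let $B'$ have width $b'/2$ and height $b'$, let $B$ cause no horizontal trimming (so $m_L=b'/3$ and $s''=b'/3+b'/2+b'/3=7b'/6$) but trim vertically so that $h$ is just below $7b'/6$; the cut is then vertical. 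With the cut at $x_0=s''/3$, a facility $l$ just to its right, and $v=l-(d_0,0)$ with $d_0$ just below $h/12$, the only lower bound available is $v_x> s''/3-s''/12=s''/4=7b'/24$, which is strictly below $m_L=b'/3$, and such placements of $v$ are realizable. Then the leftmost facility of $L\cap B$ is \emph{not} strictly to the left of $v$, your witness $u_0$ fails, and no other facility of $L\cap B$ can substitute for it since they all have abscissa at least $m_L$ (while facilities of $L$ outside $B$ may be arbitrarily far). Your inequality does hold when $b'$ is the horizontal side of $B'$ (then $s''\ge m_L+b'\ge 4m_L$ while $v_x>s''/4$), which is guaranteed when $B$ does not trim $B'_+$; the trimmed, orthogonal-orientation case is the one your case analysis cannot close with these constants.

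For comparison, the paper argues the contrapositive and never needs to exhibit a witness: assuming $\len(s)<d_1/4$, the same diagonal bound $<4\len(s)<d_1$ combined with the \emph{definition} of $d_1$ shows that the left piece of $B''$ contains no facility of $L$ to the left of $v$; hence the left side of $B'$ is at or to the right of $v$, so $\dist(v,\text{left}(B''))\le\dist(\text{left}(B'),\text{left}(B''))\le r/4$ while $\dist(l,\text{left}(B''))\ge r/3$, giving $d_0\ge r/12\ge\len(s)/12$. Be aware, though, that the paper's step $r\ge\dist(\text{left}(B'),\text{left}(B''))+b'$ also tacitly assumes $b'$ is the horizontal extent of $B'$, i.e.\ it is sensitive to exactly the configuration above; so whichever direction you argue, the case where the long side of $B'$ is orthogonal to the long side of $B''$ (possible only after trimming by $B$) must be treated explicitly or absorbed into a constant worse than $12$.
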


\begin{proof}
  By a slight abuse of notation, $s$ denotes both the first line that separates $v$ from $l$ and its length. 
  Assume $s < d_1/4$. Either $s$ was produced by the Sub-Rectangle process, or by the Cut-Rectangle process.
  If $s$ was produced by the Sub-Rectangle process then by Lemma \ref{lem:sub-rect} $s$ has length at most $5d_0$.\\
  Otherwise $s$ was produced by the Cut-Rectangle process.
  We consider the last rectangle $R$ (in the top-down order) that 
  contains both $v$ and $l$ and let $r$ be the length of the 
  longer side of $R$.
  Observe that the diagonal of the rectangle to the left of $s$ is at most $\sqrt{s^2 + (2r/3)^2}$.
  By Lemma \ref{lem:aspect_ratio}, the aspect ratio of $R$ is at most $5$ and so, $r \le 5s$.
  Thus, the diagonal is at most $s \sqrt{1 + (2r/3)^2} < 4s$. Since $4s < d_1$, the part of $R$ to the left of $v$
  contains no facility.\\
  We write
  \begin{equation}
    \label{eq:lem4_1}
    d_0 \ge \dist(l, \text{left}(R)) - \dist(v, \text{left}(R)),
  \end{equation}  
  where $\dist(l, \text{left}(R)), \dist(v, \text{left}(R))$ are respectively the distances from $l$ and $v$ to the left side of $R$.\\
  Since $l$ is to the right of $s$,
  \begin{equation}
    \label{eq:lem4_2}
    \dist(l, \text{left}(R)) \ge r/3.
  \end{equation}
  
  Since $R$ is produced by the Sub-rectangle process, there is a parent $B$.
  We consider $B'$ as defined in Algorithm \ref{algo:dissection}.
  By definition $B'$ does not extend beyond the facilities of $R$.
  Since $R$ has no facility to the left of $v$, the left side of $B'$ is to the right of $v$.
  Thus, $\dist(v, \text{left}(R)) \le \dist(\text{left}(B'),\text{left}(R))$.
  Observe that $r \ge \dist(\text{left}(B'),\text{left}(R)) + b' \ge 4 \cdot \dist(\text{left}(B'),\text{left}(R))$.
  It follows that
  \begin{equation}
    \label{eq:lem4_3}
    r/4 \ge \dist(v, \text{left}(R)).
  \end{equation}
  Combining Equations \ref{eq:lem4_1}, \ref{eq:lem4_2} and \ref{eq:lem4_3}, we obtain
  $$d_0 \ge r/3 - r/4 = r/12 \ge s/12.$$

\end{proof}

\begin{lemappx}[\ref{lem:fact1}]
  Let $\text{Event}_0(d, s)$ denote the event that an edge $e$ of length $d$ is separated by a cutting line of
  side-length $s$ that is produced by Cut-Rectangle.\\
  Then, $Pr[\text{Event}_0(d, s)] \le 3d/s.$
\end{lemappx}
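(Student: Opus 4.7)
The statement is essentially a one-step probability computation on the distribution of the cutting line $\ell$ produced by the Cut-Rectangle process, so I do not expect any serious obstacle; the only subtlety is comparing the length of $\ell$ (which is $s$ in the lemma statement) with the side of the rectangle along which $\ell$'s position is actually randomized.

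First I would fix the invocation of Cut-Rectangle that produces the cut in question. Let $B''$ be the rectangle it operates on and let $s''$ denote its longer side. Reading off Algorithm~\ref{algo:dissection}: the cut $\ell$ is orthogonal to a side of length $s''$, its intersection point with that side is drawn uniformly from the middle third of $B''$ (an interval of length $s''/3$), and $\ell$ itself has length equal to the shorter side of $B''$. In particular $s \le s''$. (Lemma~\ref{lem:aspect_ratio} also gives $s'' \le 5s$, but this upper bound is not needed here.)

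Next I would apply a simple projection argument. An edge $e$ of Euclidean length $d$ projects orthogonally onto the long axis of $B''$ to an interval of length at most $d$ (the projection is $d\cos\theta$ for some angle $\theta$). The cut $\ell$ separates the two endpoints of $e$ only when $\ell$'s random position on the long axis lies inside this projection interval. Since the position is uniform on an interval of length $s''/3$, this occurs with probability at most
\[ \frac{d}{s''/3} \;=\; \frac{3d}{s''} \;\le\; \frac{3d}{s}, \]
using $s\le s''$ in the last inequality. This is exactly the claimed bound.
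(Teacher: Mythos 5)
Your proof is correct and uses essentially the same mechanism as the paper's: the cut position is uniform over the middle third of the longer side $s''$, the edge's projection onto that axis has length at most $d$, giving probability at most $3d/s''$, and then $s\le s''$ converts this into $3d/s$. The paper phrases this as a recursion down the dissection tree (``if $R_0$ has side-length $s$ then \dots, else descend to the child containing $e$''), but that traversal is just a bookkeeping wrapper around exactly the uniform-cut computation you perform; if anything, your version is cleaner because it makes the inequality $s\le s''$ explicit, which the paper's proof leaves implicit. One small point of informality that is also present in the paper: both arguments ``fix the invocation'' that produces a cut of length $s$, tacitly treating it as unique along the root-to-leaf path containing $e$; strictly speaking the cut length can persist across a few consecutive levels while the longer side shrinks geometrically, but conditioning level by level and summing still yields an $O(d/s)$ bound, so this does not affect the result. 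No genuine gap.
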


\begin{proof}
  We consider the dissection tree. Let $R_0$ be the root of the dissection tree. 
  If $R_0$ has side-length $s$ then the probability that $e$ is cut by a line of side-length $s$ is 
  $3d/s$.
  Else, it does not matter whether $R_0$ cuts $e$ or not, and in any case we now look at the children of $R_0$ that contain $e$; say $R_1$ and $R_2$.
  If $R_1$ or $R_2$ has side-length $s$ then the probability that $e$ is cut by a line of side-length $s$ is then at most $3d/s$. Else, we go 
  deeper in the tree until we reach the rectangles that contain $e$ and have side-length $s$. 
  The probability that $e$ is cut by such a rectangle is thus at most $3d/s$.
  Hence, the probability that $e$ is cut by a line of side-length $s$ is at most $3d/s$.
\end{proof}

\section{Theorem \ref{thm:FL}}
\label{appx:FL}
\begin{thmappx}[\ref{thm:FL}]
  Algorithm \ref{algo:LS} produces a solution $L$ of cost at most $(1 + \mathcal{O}(\eps)) \cdot \cost(\opt)$.
\end{thmappx}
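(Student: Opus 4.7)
The plan is to apply Proposition~\ref{thm:new_struct} to the pair $(L,\opt)$ for a suitable number of portals $p$ (think of $p$ as a polynomial in $1/\eps$ chosen so that $\log(p)/p = \Theta(\eps)$), and fix a deterministic dissection $\mathcal{D}_p$ together with the assignment $E$ whose cost is at most the expectation given by the proposition. Then I would build a ``comparison'' solution $G$ from $\opt$ by adding one facility at each portal of $\mathcal{D}_p$ and, for every region $R$ produced by the Partition Process, by opening all facilities of $L_R$ (with $|L_R|=O(1/\eps)$). Clients keep their $\opt$-assignments in $G$. Counting regions ($O(\eps^2(|L|+|\opt|))$) and portals per region ($O(1/\eps^2)$), the facility opening cost of $G$ exceeds $\cost(\opt)$ by at most $O(\eps(|L|+|\opt|))\,f$; the service cost is unchanged, so $\cost(G)\le \cost(\opt)+O(\eps(|L|+|\opt|))\,f$.

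\textbf{The local comparison.} The heart is to exploit local optimality of $L$ region by region. For each non-Partition region $R$, consider the candidate obtained from $L$ by removing the facilities in $L_R$ and inserting all facilities in $G_R$ together with the portal-facilities of $R$ (this amounts to exchanging $O(1/\eps^3)$ facilities, matching Condition($S',\eps$) for facility location). Reassign: clients in $C_R$ go to their $\opt$-facility, and clients $c$ with $c(L)\in L_R$ but $c(\opt)\notin V_G(G_R)$ are rerouted using the assignment $E$ of Proposition~\ref{thm:new_struct}, which sends each such client to either a portal of $R$ or a facility of $L\setminus L_R$. Local near-optimality then gives, for every region $R$,
\[
(|G_R|-|L_R|)f+\sum_{c\notin C_R\cup\Delta_R}c_L+\sum_{c\in C_R}c_G+\sum_{c\in\Delta_R}c_E\;\ge\;(1-1/n)\bigl(|L|f+\sum_c c_L\bigr).
\]
Summing these inequalities over all non-Partition regions, the $|L|f$ terms on the right sum up to give (roughly) $|L|f$ on the left (using that the regions partition the facilities of $L\cup\opt$ with at most $O(\eps(|L|+|\opt|))$ extra openings from portals and Partition regions), and the service-cost terms on the right collapse because every client belongs to exactly one region class.

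\textbf{Assembly.} After summation one obtains an inequality of the form
\[
(1-O(\eps))(|L|f+\sum_c c_L) \;\le\; \cost(G) + \sum_c c_E - \sum_c c_L + O(\eps)(|L|+|\opt|)f,
\]
and Proposition~\ref{thm:new_struct} controls $\sum_c c_E-\sum_c \max(c_L,c_\opt)$ by $O(\eps)\sum_c(c_L+c_\opt)$. Plugging in the bound on $\cost(G)$ and rearranging yields
\[
(1-O(\eps))\cost(L)\;\le\;(1+O(\eps))\cost(\opt)+O(\eps)\cdot |L|f,
\]
and the $O(\eps)\,|L|f$ can be absorbed into the left-hand side since $|L|f\le\cost(L)$. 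Dividing through proves $\cost(L)\le(1+O(\eps))\cost(\opt)$.

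\textbf{Main obstacle.} The routine inequalities are easy, but the bookkeeping is not: one must verify that when the per-region local-search inequalities are summed, every client's service cost is counted exactly once on the left, and that the ``boundary'' clients (those in $\Delta_R=V_L(L_R)\setminus V_G(G_R)$ or those whose $L$- and $\opt$-facilities sit in different regions) are exactly the ones to which Proposition~\ref{thm:new_struct} applies, so that $c_E$ (rather than $c_L$ or $c_\opt$) is charged to them and the expected-cost bound from the proposition kicks in. Getting this charging right, together with the correct accounting of opened facilities in $G$ vs.\ $L$ across all regions so that the $f\cdot(|G|-|L|)$ and $\eps f\cdot(|L|+|\opt|)$ terms combine cleanly, is where the care is needed.
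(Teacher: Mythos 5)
Your proposal follows essentially the same route as the paper's own proof: the same comparison solution $G$ (facilities of $\opt$ plus portals plus $L_R$ for Partition regions), the identical per-region local-optimality inequality with the reassignment of $C_R$ to $\opt$-facilities and of $\Delta_R$ via the assignment $E$ from Proposition~\ref{thm:new_struct}, followed by summation over regions and absorption of the $O(\eps)(|L|+|\opt|)f$ error terms. The bookkeeping you flag as the main obstacle is exactly the computation carried out in Appendix~\ref{appx:FL}, so your plan is correct and matches the paper's argument.
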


\begin{proof}

  Let $\opt$ be a globally optimum solution and $L$ be a locally optimum solution.
  By Proposition \ref{thm:new_struct}, for any $p>0$ there exists an assignment $E$ for each random dissection $\mathcal{D}_p$ with portals 
  of $L \cup \opt$,
  such that for any client $c$ and region $R$, if $c(L) \in R$ and $c(G) \notin R$ then 
  $c$ is served by a portal of $R$ or a facility of $L \setminus R$ in $E$ and
  the expected cost of $E$ is at most $\mathds{E} = \sum\limits_{c \in \mathcal{C}} \max(c_L,c_G) +  \mathcal{O}(\log(p)/p \cdot (\sum\limits_{c \in \mathcal{C}} (c_G + c_L)))$. \\
  This implies that there exists a dissection $\mathcal{D}_p$ for which $E$ has value at most $\mathds{E}$.

  Throughout the proof, we  consider this dissection $\mathcal{D}_p$ and fix $\eps := \log(p)/p$.
  Let $\mathcal{R}$ be the set of regions associated to $\mathcal{D}_{p}$.
  We start by constructing a solution $G$ based on $\opt$ and we  compare the cost of $L$ to the cost of $G$.
  The solution $G$ contains all the facilities of $\opt$ plus some extra facilities.
  First, it has a facility at each portal of $\mathcal{D}_{p}$.
  Moreover, for each region $R$ that is produced by the Partition Process, 
  we open the facilities of $L_R$. Recall that for each of these regions, $|L_R| \le 1/\eps $.
  We keep the same assignment for the clients.
  Since there are $\mathcal{O}(\eps^2 (|G|+|L|)$ regions and that for each region $G$ uses at most $1/\eps$ extra facilities, 
  the cost of $G$ is at most Cost($\opt$)+ $\mathcal{O}(\eps (|\opt| + |L|)f)$.
  We now prove that the cost of $L$ is at most $(1 + \mathcal{O}(\eps))/(1 - \mathcal{O}(\eps))$ times the cost of $G$, namely
  $$
  |L| \cdot f + \sum\limits_{c \in \mathcal{C}} c_L \le (\frac{1 + \mathcal{O}(\eps)}{1 - \mathcal{O}(\eps)}) (|G|\cdot f 
  + \sum\limits_{c \in \mathcal{C}} c_G).$$


  We focus on the cost of a region $R$. 
  We show that, by local optimality, for each region $R$, replacing solution $L$ by solution $G$ does not lead to a much better cost.
  We serve the clients of $C_R$ optimally (namely by the facilities that serve them in $G$) and
  the clients of $L_{R} \setminus G_R$ by the facilities located on the portals of $R$ or by the facilities
  of $L \setminus L_R$, depending on whether they belong to $C_L$ or $C_G$ and according to the assignment $E$.
  
  Since $|L_R \setminus G_R| + |G_R \setminus L_R|  = \mathcal{O}(\eps^3)$, the locality argument applies.
  Namely, we have
  $$
  (|G_R| - |L_R|)f + \sum\limits_{c \notin C_R \cup \Delta_R}c_L + \sum\limits_{c \in C_R} c_G + \sum\limits_{c \in \Delta_R} c_E 
  \ge (1-1/n)  (|L|f + \sum\limits_{c} c_L).
  $$
  Rearranging an summing over all region $R$ of $\mathcal{R}$, we derive

  \begin{equation}
    \label{eq:FL_comparative_cost}    
    \sum\limits_{R \in \mathcal{R}} \left( (|G_R| - |L_R|)f + \sum\limits_{c \in C_R} (c_G - c_L)  + 
    \sum\limits_{c \in \Delta_R} \left( c_E - c_L \right)  \right)\ge - \frac{|\mathcal{R}|}{n} \cdot \cost(L).  
  \end{equation}
  We now focus on proving an upper bound on the left-hand side of the above equation.
  We split the sum over $\Delta_R$ depending on whether $c$ is in $C_L$ or $C_G$.
  By Proposition \ref{thm:new_struct},
 $$\sum\limits_{\substack{c \in \Delta_{R}\\ c \in C_G}} 
 \left(c_E - c_L \right) \le \sum\limits_{\substack{c \in \Delta_R\\ c \in C_G}}  \left( c_L - c_L + \mathcal{O}(\eps \cdot (c_G + c_L)) \right),$$
 and

 $$ \sum\limits_{\substack{c \in \Delta_{R}\\ c \in C_L}} \left( c_E - c_L \right) \le   
  \sum\limits_{\substack{c \in \Delta_R\\ c \in C_L}} \left( c_G - c_L + \mathcal{O}(\eps \cdot (c_G+c_L))\right).$$
  Replacing in Inequality \ref{eq:FL_comparative_cost},
  $$   (|G| - |L|)f + \sum\limits_{R \in  \mathcal{R}}\left( \sum\limits_{c \in C_R} (c_G - c_L)  +  
  \sum\limits_{\substack{c \in \Delta_R\\ c \in C_L}} \left( c_G - c_L \right)\right) 
  + \sum\limits_{c \in \mathcal{C}} \mathcal{O}(\eps (c_G+c_L))
  \ge
  - \frac{\mathcal{|R|}}{n}  \cost(L).$$  
  Definition of $C_R$ leads to
  $$   (|G| - |L|)f + \sum\limits_{c \in \mathcal{C}} (c_G - c_L) + \mathcal{O}(\sum\limits_{c \in \mathcal{C}} \eps \cdot (c_G + c_L)) \ge
  - \frac{|\mathcal{R}|}{n} \cost(L).$$
  Since $|\mathcal{R}| = \mathcal{O}(\eps^2 \cdot n)$, we conclude 
  $$(1+ \mathcal{O}(\eps))\left(|G|f + \sum\limits_{c \in \mathcal{C}} c_G \right) \ge 
  (1 - \mathcal{O}(\eps))\left( |L|f + \sum\limits_{c \in \mathcal{C}} c_L \right)$$
  and the Theorem follows.
 
\end{proof}

\section{$k$-Median}
\label{appx:kmed}
\begin{lemappx}[\ref{lem:kmed_partition} Balanced Clustering]
  Let $\mathcal{R} = \{r_1,...,r_p\}$ be a collection of disjoint sets. Each set contains elements of type either $L$ or $G$
  and has size at least $1/2\eps^2$ and at most $1/\eps^2$. The total number of elements of type $L$
  is $(1+3\eps)$ times higher than the number of elements of type $G$.

  There exists a clustering of $\{r_1,...,r_p\}$ in clusters
  satisfying the following property. For any cluster $C$, 
  \begin{itemize}
  \item $C$ contains at most $\mathcal{O}(1/\eps^5)$ elements of $\mathcal{R}$, namely $|C| = \mathcal{O}(1/\eps^5)$;
  \item the difference between the number of elements of $L$ in the sets contained in $C$ 
    and the number of elements of $G$ in the sets contained in $C$ is at least $|C|/\eps$:
    $$\sum_{r_i\in C} |r_i \cap L| - \sum_{r_i\in C} |r_i \cap G| \geq |C|/\eps,$$
    for any $1/\eps \in \mathds{N}$.
  \end{itemize}
\end{lemappx}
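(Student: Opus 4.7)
The plan is to recast the lemma as a balanced-partition problem. Set $\ell_i := |r_i \cap L|$, $g_i := |r_i \cap G|$, and $\delta_i := \ell_i - g_i$. Since $\ell_i + g_i \in [1/(2\eps^2), 1/\eps^2]$ and $\sum_i \ell_i = (1+3\eps)\sum_i g_i$, the identity $\sum_i \delta_i = 3\eps \sum_i g_i$ combined with the size lower bound on each $r_i$ yields $\sum_i \delta_i \geq p/\eps$ (up to lower-order constants in $\eps$). Define the normalized weight $w_i := \delta_i - 1/\eps$, which is an integer lying in the interval $[-(1/\eps^2 + 1/\eps),\, 1/\eps^2 - 1/\eps]$ since $1/\eps \in \mathds{N}$. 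Under this reparametrization the desired cluster inequality $\sum_{r_i \in C} \delta_i \geq |C|/\eps$ becomes $\sum_{r_i \in C} w_i \geq 0$, while the global budget becomes $\sum_i w_i \geq 0$. The task is then to partition $\{r_1,\ldots,r_p\}$ into groups of size at most $1/\eps^5$, each with nonnegative total weight.

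Next I would describe a greedy packing algorithm. Separate the sets into a surplus family $S_+ = \{i : w_i \geq 1\}$ and a deficit family $S_- = \{i : w_i \leq -1\}$; any set with $w_i = 0$ is placed immediately into its own singleton cluster, which trivially satisfies the condition. Sort $S_+$ by $w_i$ in decreasing order and $S_-$ by $|w_i|$ in decreasing order. For each deficit set $d \in S_-$ in sorted order, open a new cluster $C := \{d\}$ and successively pop surplus sets from $S_+$ into $C$ until the running sum $\sum_{i \in C} w_i$ becomes nonnegative, then close $C$ and continue. Each surplus set still unused at the end becomes its own singleton cluster. By construction, every output cluster has nonnegative total weight, which is exactly the required property.

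The cluster size bound is the easy half: for a cluster containing a deficit set $d$, integrality of the $w_i$ forces each chosen surplus $w_s \geq 1$, so the number of surplus sets used is at most $|w_d| \leq 1/\eps^2 + 1/\eps$; hence $|C| \leq 1 + 1/\eps^2 + 1/\eps = \mathcal{O}(1/\eps^2)$, well within the $1/\eps^5$ budget (and singletons trivially satisfy $|C| = 1$). The step I expect to be the main obstacle is showing that the greedy never exhausts the surplus pool before every deficit set has been processed. The key identity is that the total overshoot summed over all covering clusters equals (total surplus consumed) $- \sum_{d \in S_-}|w_d|$, and the global bound $\sum_i w_i \geq 0$ says exactly that the total available surplus already exceeds the total deficit. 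A careful pairing rule---for example, a best-fit matching of large deficits with large surpluses (and small with small), so that the per-cluster overshoot is bounded by the last surplus popped---keeps the cumulative surplus consumption within the available slack $\sum_i w_i$, so the procedure terminates successfully and the lemma follows.
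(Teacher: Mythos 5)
Your reduction to weights $w_i=\delta_i-1/\eps$ and the goal of partitioning into clusters of nonnegative total weight is exactly the paper's reformulation (there $v(r_i)$), but the clustering procedure you build on top of it has a genuine structural flaw, and it is precisely at the step you flag as the ``main obstacle.'' Your greedy opens one cluster per deficit set and covers it by popping surplus sets, so every deficit set consumes at least one whole surplus set. Nothing in the hypotheses prevents the number of deficit sets from vastly exceeding the number of surplus sets even when $\sum_i w_i\ge 0$: take $S$ sets consisting only of $L$-elements with size $1/\eps^2$ (so $w_s=1/\eps^2-1/\eps$) and roughly $D\approx 2S/\eps$ sets with $w_d=-1$, i.e.\ $|r_i\cap L|-|r_i\cap G|=1/\eps-1$ with $|r_i|$ near the maximum size; the counts can be balanced so that the global ratio $|L|=(1+3\eps)|G|$ holds exactly, and then $\sum_i w_i\approx S(1/\eps^2-3/\eps)>0$. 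On this instance your procedure exhausts the surplus pool after $S$ clusters while $\Theta(S/\eps)$ deficit sets remain, and no ``careful pairing rule'' within your framework can help, because any valid clustering here must place many deficit sets together with a single surplus set --- something your algorithm never does. (Separately, even when surplus sets are plentiful, your accounting does not control the cumulative overshoot by the slack $\sum_i w_i$: each cluster can overshoot by up to $1/\eps^2-1/\eps-1$, and the number of deficit clusters times this can exceed the slack, so the termination claim would still need a real argument.)

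The paper avoids both problems by a different construction: as long as some positive value class $v_i$ and some negative value class $v_{-j}$ each contain more than $1/\eps^2+1/\eps$ sets, it forms a part from $j$ sets of value $i$ and $i$ sets of value $-j$, which has total weight exactly zero (no overshoot at all) and size $O(1/\eps^2)$, and crucially may contain many deficit sets per surplus set; it maintains the invariant $(1+\tfrac{2\eps}{1-\eps})|G_s|\le |L_s|\le(1+\tfrac{4\eps}{1-2\eps})|G_s|$ on the elements not yet clustered, and when no such pair of large classes exists it places all remaining sets into one final cluster, using the invariant to show this cluster has nonnegative weight and only $O(1/\eps^5)$ sets. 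If you want to repair your proof, you should either allow clusters with several deficit sets (the zero-sum class-pairing above is one clean way) and then handle the leftover via a global invariant as the paper does, rather than trying to certify a one-deficit-per-cluster matching. Finally, be careful with the global budget: your claim $\sum_i w_i\ge 0$ needs the lower bound on $\sum_i g_i$ coming from the set sizes, and with all sets at the minimum size $1/(2\eps^2)$ the constant does not actually work out (the paper's corresponding inequality $\sum_i v(r_i)\ge\eps|G|$ has the same looseness), so in any rigorous version the constants in the hypothesis or in the normalization must be tracked explicitly.
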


\begin{proof}
  We first define for each set $r_i$, $v(r_i) := |L \cap r_i| - |G \cap r_i| - 1/\eps$.

  The assumption on the total number of elements of $L$ and $G$ can 
  be rewritten as $\sum\limits_{r_i} v(r_i) \ge \eps G > 0$.

  Besides, the cardinality bounds on $r_i$ imply that $v(r_i)$ is an 
  integer in the range $[-1/\eps^2-1/\eps, 1/\eps^2-1/\eps]$.

  We need to construct a clustering of $\mathcal{R}$ into small clusters
  such that for each cluster $C$,: $\sum_{r_i\in C} v(r_i)\geq 0$.
  We exhibit an algorithm that constructs such a clustering.
  For any set $r_i$ such that $v(r_i) = 0$, we create a new part that 
  contains only this set.
  This part trivially satisfies the above property.

  We now consider the remaining sets.
  While there exists $1 < i,j$, such that $\lemThreshold < |v_i|,|v_{-j}|$,
  We take $i$ sets from $v_{-j}$ and $j$ sets from $v_i$ and create a new part that contains them all.
  This part satisfies the property of the Lemma and contains at most $2/\eps^2$ sets of $\mathcal{R}$.
  
  We now turn to the last case, namely $\forall j \ge 0$, $|v_j| \le \lemThreshold$ 
  (or symmetrically $\forall j \le 0$, $|v_j| \le \lemThreshold$).
  We claim that it is possible to make on last part containing all the remaining sets and that this part
  satisfies the property of the Lemma and has size $\mathcal{O}(1/\eps^5)$.
  We start by proving that, after each step $s$ of the above algorithm, 
  the following invariant holds
  \begin{equation}
    \label{eq:lem_IH}
    (1+\frac{2\eps}{1-\eps})|G_s|\le |L_s| \le (1+\frac{4\eps}{1-2\eps}) |G_s|,
  \end{equation}
  where $L_s$ and $G_s$ are the number of elements of type $L$ and $G$ respectively that
  are not contained in any part after step $s$.\\
  This is true at the beginning of the algorithm. We show that it is true all the way to the last step.
  Assume that it holds after step $s$, we prove that it is true after step $s+1$.\\
  Let $P$ be the part created at step $s$. This part contains say $P_G$ elements of $G$ and so, $P_G + |P|/\eps$ elements of 
  $L$.
  By induction hypothesis, Inequality \ref{eq:lem_IH} holds.
  Hence, by expressing $L_{s-1}$ and $G_{s-1}$ in terms of $L_s$ and $G_s$, it follows that
  $$(1+\frac{2\eps}{1-\eps})(|G_s| + P_G) \le L_s + P_G + |P|/y \le (1+\frac{4\eps}{(1-2\eps)})(|G_s| + P_G).$$
  By definition of the $s_i$, 
  $$|P|/2\eps^2 \le 2P_G + |P|/\eps \le |P|/\eps^2.$$
  Rearranging and replacing in the inequalities above, it follows 
  $$\frac{2\eps}{1-\eps} P_G \le |P|/\eps \le \frac{4\eps}{1-2\eps} P_G.$$

  At final step $f$, the upper and lower bounds on $L_f$ induced by Inequality \ref{eq:lem_IH} implies that 
  the final part has size at most $\mathcal{O}(1/\eps^5)$ and satisfies the properties of the Lemma.
\end{proof}

\begin{thmappx}[\ref{thm:kmed}]
  Algorithm \ref{algo:LS} for $k$-Median produces a solution $L$ that is a $(1 + \mathcal{O}(\eps), 1 + \mathcal{O}(\eps))$ bi-criteria approximation.
\end{thmappx}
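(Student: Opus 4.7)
The plan is to mirror the Facility Location proof (Theorem \ref{thm:FL}), but to handle the cardinality constraint $|S'|\le (1+3\eps)k$ via the Balanced Clustering Lemma. First I would invoke Proposition \ref{thm:new_struct} with $p=1/\eps^2$, fix a dissection $\mathcal{D}_p$ of $L\cup\opt$ on which the guaranteed assignment $E$ attains (at most) its expected cost, and let $\mathcal{R}$ be the associated set of regions. The cardinality claim is immediate from the stopping condition: the algorithm only accepts $S'$ with $|S'|\le (1+3\eps)k$, so $|L|=(1+\mathcal{O}(\eps))k$. The comparison solution $G$ is built exactly as in Theorem \ref{thm:FL}: take $\opt$, add one facility at each portal of $\mathcal{D}_p$, and for every region $R$ produced by the Partition Process additionally open $L_R$ (which has size at most $1/\eps$); client assignments in $G$ are inherited from $\opt$.

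Second, I would apply Lemma \ref{lem:kmed_partition} to the collection of regions $\mathcal{R}$, using for each region $R$ the counts $(|L_R|,|G_R|)$. The size bounds on regions and the global inequality $|L|\ge (1+3\eps)|\opt|$ that can be assumed without loss of generality (otherwise $L$ already has at most $(1+\mathcal{O}(\eps))k$ medians and a straight copy of the Facility Location argument with $f=0$ suffices) provide the hypothesis of Lemma \ref{lem:kmed_partition}. This yields a clustering $\mathcal{P}$ of $\mathcal{R}$ whose clusters $P$ have size $\mathcal{O}(1/\eps^5)$ and satisfy
\[
\sum_{R\in P}|L_R|\;\ge\;\sum_{R\in P}|G_R|\;+\;|P|/\eps.
\]
The $|P|/\eps$ slack is exactly what makes a cluster-wide swap legal: replacing $L_P:=\bigcup_{R\in P}L_R$ by $G_P\cup\{\text{portals of regions in }P\}\cup\{\text{partition-region facilities of }P\}$ stays within the $(1+3\eps)k$ cardinality budget, and the symmetric difference has size $\mathcal{O}(1/\eps^9)$, matching the $k$-Median Condition.

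Third, for each cluster $P$ I would set up the locality inequality in the same way as in the Facility Location proof: serve each $c\in C_R$ (with $R\in P$) by its $\opt$-facility, and each $c\in \Delta_R$ by $E(c)$ (a portal of $R$ or a facility of $L\setminus L_R$, valid because these are all present in the swapped-in solution). Summing the resulting inequalities over $R\in P$ and then over $P\in\mathcal{P}$ gives
\[
\sum_{R\in\mathcal{R}}\Bigl(\sum_{c\in C_R}(c_\opt-c_L)+\sum_{c\in\Delta_R}(c_E-c_L)\Bigr)\;\ge\;-\mathcal{O}(\eps^2)\cost(L).
\]
Splitting $\Delta_R$ according to $C_L$ versus $C_G$ and bounding $c_E$ via Proposition \ref{thm:new_struct} (i.e.\ $c_E\le c_L+\mathcal{O}(\eps)(c_\opt+c_L)$ on $C_G$ and $c_E\le c_\opt+\mathcal{O}(\eps)(c_\opt+c_L)$ on $C_L$) collapses the region-sum to $\sum_c(c_\opt-c_L)+\mathcal{O}(\eps)\sum_c(c_\opt+c_L)$, yielding
\[
(1-\mathcal{O}(\eps))\sum_{c\in\mathcal{C}}c_L\;\le\;(1+\mathcal{O}(\eps))\sum_{c\in\mathcal{C}}c_\opt,
\]
which rearranges to the claimed $(1+\mathcal{O}(\eps))$ approximation.

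The main obstacle is the cluster-level swap itself: unlike in Facility Location, where the only restriction on $|S\triangle S'|$ is the neighborhood size, here the swap must also respect $|S'|\le (1+3\eps)k$. A region-by-region swap as in Theorem \ref{thm:FL} could easily violate this cardinality cap, because the portals and partition-region facilities added to $G$ push the total above $k$ within any single region where $|L_R|\le |G_R|$. The Balanced Clustering Lemma is tailored precisely to this difficulty: the $|P|/\eps$ extra $L$-facilities in each cluster absorb the added portals and partition facilities, so the swap is a legal local move for Algorithm \ref{algo:LS}, and the local-optimality bound can then be invoked cluster by cluster.
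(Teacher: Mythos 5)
Your proposal follows the paper's own proof essentially step for step: the same comparison solution $G$ (namely $\opt$ plus a facility at every portal plus $L_R$ for the Partition-Process regions), the same use of the Balanced Clustering Lemma \ref{lem:kmed_partition} to form cluster-level swaps of symmetric-difference size $\mathcal{O}(1/\eps^9)$ that respect the $(1+3\eps)k$ cardinality cap, the same per-cluster locality inequality summed over $\mathcal{P}$, and the same split of $\Delta_R$ into $C_L$ and $C_G$ with the bounds from Proposition \ref{thm:new_struct}. The only deviation is your explicit (and somewhat informal) WLOG discussion of the hypothesis of Lemma \ref{lem:kmed_partition}, a point the paper itself leaves implicit.
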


\begin{proof}
  Remark first that solution $L$ uses $(1+\mathcal{O}(\eps))k$ facilities.
  We now show that the cost of solution $L$ is at most 
  $1 + \mathcal{O}(\eps)$ times higher than the cost of the optimal solution.\\
  Recall that by Proposition \ref{thm:new_struct}, for any $p>0$ there exists an assignment $E$ for each random dissection $\mathcal{D}_p$ 
  of $L \cup \opt$ with portals, such that for any client $c$ and region $R$, if $c(L) \in R$ and $c(\opt) \notin R$ then 
  $c$ is served by a portal of $R$ or a facility of $L \setminus R$ in $E$ and
  the expected cost of $E$ is at most $\mathds{E} = \sum\limits_{c \in \mathcal{C}} \max(c_L,c_\opt) +  \mathcal{O}(\log(p)/p \cdot (\sum\limits_{c \in \mathcal{C}} (c_\opt + c_L)))$. \\
  This implies that there exists a dissection $\mathcal{D}_p$ for which $E$ has value at most $\mathds{E}$.
  Throughout the proof, we  consider such a dissection $\mathcal{D}_p$ and fix $\eps := \log(p)/p$.
  Let $\mathcal{R}$ be the set of regions associated to $\mathcal{D}_{p}$.

  We prove that the cost of $L$ is at most $(1 + \mathcal{O}(\eps))/(1 - \mathcal{O}(\eps))$ times the cost of $S$, namely
  $$
  \sum\limits_{c \in \mathcal{C}} c_L \le \frac{1 + \mathcal{O}(\eps)}{1 - \mathcal{O}(\eps)} \sum\limits_{c \in \mathcal{C}} c_\opt.
  $$


  Let $\mathcal{P}$ be a clustering of the the regions satisfying the properties of Lemma \ref{lem:kmed_partition} 
  (depending on $L$ and $\opt$).
  We start by constructing a solution $G$ based on $\opt$ and we  compare the cost of $L$ to the cost of $G$.
  We construct $G$ in a similar way to in the proof of Theorem \ref{thm:FL}.
  Namely, the solution $G$ contains all the facilities of $\opt$ plus some extra facilities:
  one facility at each portal of $\mathcal{D}_{p}$ and for each region $R$ that is produced by the Partition Process, 
  we open the facilities of $L_R$. Recall that for each of these regions, $|L_R| \le 1/\eps $.
  We keep the same assignment for the clients.
  
  We now compare the costs of $L$ and $G$. To do so, we  consider all the regions of each cluster of the clustering $\mathcal{P}$
  at the same time. Namely for each cluster $R$, $L$ uses at least as many facilities as $G$.
  Therefore $|S_P \setminus L| + |L \setminus S_P| = \mathcal{O}(1/\eps^9)$ and the locality argument applies.
  


  We show that, by local optimality, the cost of $S_P$ is close to
  the cost of $L$.
  We serve the clients of $C_R$ optimally (namely by the facilities that serve them in $G$) and
  the clients of $\Delta_R$ by the facilities located on the
  portals of $R$ or by the facilities
  of $L \setminus L_R$, according to the assignment $E$.
  By local optimality, the cost of replacing $L$ by $S_P$ is greater (up to a factor $(1-1/n)$) than the cost of $L$.
  Namely, we have
  $$
    \sum\limits_{R \in P} \left( \sum\limits_{c \notin C_R \cup \Delta_R}c_L + \sum\limits_{c \in C_R} c_G + \sum\limits_{c \in \Delta_R} c_E \right)
    \ge (1-1/n)  \cost(L).
  $$

  Rearranging and  summing over all part $P$ of $\mathcal{P}$,

  \begin{equation}
    \label{eq:kmed_comparative_cost}    
    \sum\limits_{P \in  \mathcal{P}} \sum\limits_{R \in P} \left( \sum\limits_{c \in C_R} (c_G - c_L)  + 
    \sum\limits_{c \in \Delta_R} \left( c_E - c_L \right)  \right)\ge - \frac{|\mathcal{P}|}{n} \cdot \cost(L).  
  \end{equation}

  We now provide an upper bound on the left-hand side of the above equation. 
  We separate the sum over $\Delta_R$ depending on whether $c$ is in $C_L$ or $C_G$.

  By Proposition \ref{thm:new_struct}, we obtain
  $$\sum\limits_{\substack{c \in \Delta_{R}\\ c \in C_G}} 
  \left(c_E - c_L \right) \le \sum\limits_{\substack{c \in \Delta_R\\ c \in C_G}}  \left( c_L - c_L + \mathcal{O}(\eps \cdot (c_G + c_L)) \right),$$
  and

  $$ \sum\limits_{\substack{c \in \Delta_{R}\\ c \in C_L}} \left( c_E - c_L \right) \le   
  \sum\limits_{\substack{c \in \Delta_R\\ c \in C_L}} \left( c_G - c_L + \mathcal{O}(\eps \cdot (c_G+c_L))\right).$$

  Replacing in Equation \ref{eq:kmed_comparative_cost}, it follows that 
  $$ \sum\limits_{P \in  \mathcal{P}}\sum\limits_{R \in P}\left( \sum\limits_{c \in C_R} (c_G - c_L)  +  
  \sum\limits_{\substack{c \in \Delta_R\\ c \in C_L}} \left( c_G - c_L \right) \right)
  + \sum\limits_{c \in \mathcal{C}} \mathcal{O}(\eps \cdot (c_G+c_L)) 
  \ge - \frac{|\mathcal{P}|}{n} \cdot \cost(L).  $$
  
  By the definition of $C_R$, the left-hand side is exactly
  $$ \sum\limits_{c \in \mathcal{C}} (c_G - c_L) +\sum\limits_{c \in \mathcal{C}} 
   \mathcal{O}( \eps \cdot (c_G + c_L)).$$
  
  Since $|\mathcal{P}| = \mathcal{O}(\eps k)$, we conclude 
  $$(1+ \mathcal{O}(\eps)) \cdot \sum\limits_{c \in \mathcal{C}} c_G \ge 
  (1 - \mathcal{O}(\eps)) \sum\limits_{c \in \mathcal{C}} c_L$$
  and the Theorem follows.

\end{proof}

\end{document}